\newtheorem{theorem}{Theorem}[section]
\newtheorem{corollary}[theorem]{Corollary}
\newtheorem{lemma}[theorem]{Lemma}
\newtheorem{proposition}[theorem]{Proposition}
\theoremstyle{definition}
\newtheorem{definition}[theorem]{Definition}
\newtheorem{remark}[theorem]{Remark}
\newtheorem{example}[theorem]{Example}
\newcommand\xqed[1]{%
  \leavevmode\unskip\penalty9999 \hbox{}\nobreak\hfill
  \quad\hbox{#1}}
\newcommand\demo{\xqed{$\triangle$}}
\renewcommand{\d}{\mathrm{d}}
\title{{\sffamily Reduction and reconstruction of multisymplectic Lie systems}}
\author{{\sffamily 
$^a$Javier de Lucas%
\thanks{e-mail: javier.de.lucas@fuw.edu.pl \ \ 
ORCID: 0000-0001-8643-144X}\ ,\
$^b$Xavier Gr\`acia%
\thanks{e-mail: xavier.gracia@upc.edu \ \ 
ORCID: 0000-0003-1006-4086}\ ,\
$^c$Xavier Rivas%
\thanks{e-mail: xavier.rivas@unir.net \ \ 
ORCID: 0000-0002-4175-5157}\ ,
$^b$Narciso Rom\'{a}n-Roy%
\thanks{e-mail: narciso.roman@upc.edu \ \ 
ORCID: 0000-0003-3663-9861}\ ,\
$^d$Silvia Vilari\~no%
\thanks{e-mail: silviavf@unizar.es \ \ 
ORCID: 0000-0003-0404-1427}
}
\\[1ex]
\normalsize\itshape\sffamily 
$^a$Department of Mathematical Methods in Physics, University of Warsaw, 
Warszawa, Poland
\\[1pt]
\normalsize\itshape\sffamily 
$^b$Department of Mathematics,
Universitat Polit\`ecnica de Catalunya,
Barcelona, Spain
\\[1pt]
\normalsize\itshape\sffamily 
$^c$Escuela Superior de Ingenier\'{\i}a y Tecnolog\'{\i}a,
Universidad Internacional de La Rioja,
Logro\~no, Spain
\\[1pt]
\normalsize\itshape\sffamily 
$^d$Centro Universitario de la Defensa de Zaragoza $\&$ I.U.M.A., Zaragoza, Spain
}
\date{}
\numberwithin{equation}{section}
\begin{document}

\maketitle

\leavevmode\vadjust{\kern -15mm}
\begin{abstract}
A Lie system is a non-autonomous system of first-order ordinary differential equations describing the integral curves of a non-autonomous vector field taking values in a finite-dimensional real Lie algebra of vector fields, a so-called Vessiot--Guldberg Lie algebra.

In this work, multisymplectic forms are applied to the study of the reduction of Lie systems through their Lie symmetries. By using a momentum map, we perform a reduction and reconstruction procedure of multisymplectic Lie systems, which allows us to solve the original problem by analysing several simpler multisymplectic Lie systems. Conversely, we study how reduced multisymplectic Lie systems allow us to retrieve the form of the multisymplectic Lie system that gave rise to them. Our results are illustrated with examples from physics, mathematics, and control theory.
\end{abstract}

\noindent\textbf{Keywords:} Lie system, multisymplectic manifold, multisymplectic reduction and reconstruction, Vessiot--Guldberg Lie algebra, energy-momentum method, Lie group, time-dependent harmonic oscillator.

\noindent\textbf{MSC\,2020 codes:}   34A26, 34A05, 34A34, 17B66, 22E70.

{\setcounter{tocdepth}{2}
\def\baselinestretch{1}
\small
\def\addvspace#1{\vskip 1pt}
\parskip 0pt plus 0.1mm
\tableofcontents
}

\section{Introduction}\label{section:intro}

A \textit{Lie system} is a system of first-order ordinary differential equations whose general solution can be expressed as an autonomous function depending on a generic finite family of particular solutions and a set of constants. The autonomous function is called a {\it superposition rule} of the Lie system. Standard examples of Lie systems are most types of Riccati equations \cite{GL13} and non-autonomous linear systems of ordinary differential equations \cite{Dissertationes,PW}.

The Lie--Scheffers theorem states that a Lie system amounts to a $t$-dependent vector field taking values in a finite-dimensional real Lie algebra of vector fields, a so-called \textit{Vessiot--Guldberg Lie algebra} \cite{Dissertationes}. The latter property gave rise to a number of methods for determining superposition rules \cite{CGM07,Dissertationes,PW}. Meanwhile, the Lie--Scheffers theorem also showed that being a Lie system is the exception rather than the rule \cite{Dissertationes}. Despite this, Lie systems admit numerous relevant physical and mathematical applications, as witnessed by the many works on the topic \cite{Am78,CCJL18,CGM00,Ru10,GMR97,Ib09,Ra06,Te60,PW}.

Recently, a lot of attention has been paid to Lie systems admitting a Vessiot--Guldberg Lie algebra of Hamiltonian vector fields and/or Lie symmetries relative to several types of geometric structures: Poisson \cite{CCJL18,CGM00,CLS13}, symplectic \cite{BBHLS13,BCHLS13,CCJL18,CLS13,Ru10}, Dirac \cite{CCJL18,CGLS14}, $k$-symplectic \cite{LV15}, multisymplectic \cite{GLMV19}, Jacobi \cite{HLS15}, Riemann \cite{HLT17}, and others \cite{CCJL18,LL18}. Surprisingly, this led to finding much more applications of Lie systems than in the literature dealing with mere Lie systems \cite{BHLS15,CCJL18,LL18}. Such structures allow for the construction of superposition rules, constants of motion, and other properties of Lie systems in an algebraic manner without relying on the solving of complicated systems of partial or ordinary differential equations \cite{CGM00,CGM07,Dissertationes,PW}. Geometric structures also explain the geometric meaning of superposition rules \cite{BCHLS13} and gave rise to the study of more general differential equations \cite{BCFHL18} as well as physical and mathematical problems \cite{CCJL18,LL18}.

This work pioneers the analysis of the reduction of Lie systems with compatible geometric structures \cite{LS20}, in particular with multisymplectic structures \cite{GLMV19}. The reduction of multisymplectic Lie systems allows us to simplify the system by reducing its number of variables while ensuring the existence of a compatible multisymplectic structure for the reduced system, which is again a multisymplectic Lie system. This may be used, for instance, to obtain superposition rules for the reduced multisymplectic Lie systems, which in turn can be derived through the multisymplectic structure \cite{GLMV19}. It should be noted that our paper presents for the first time the use of a type of reduction procedure for Lie systems (cf.\ \cite{Dissertationes,LS20}). 

More particularly, by proposing a certain type of reduction process of multisymplectic structures, which is interesting on its own due to the very wide interest in any kind of multisymplectic reduction process (see \cite{Bl21,BMR22,EMN18} and references therein), we perform a reduction process of the so-called multisymplectic Lie systems introduced in \cite{GLMV19}. More specifically, our reduction process is focused on Lie systems of locally automorphic type, i.e. they can be considered locally as Lie systems on Lie groups of a very particular but relevant type \cite{GLMV19}.   
Finally, we also describe a reconstruction procedure to recover the initial $t$-dependent multisymplectic Lie system from several of its multisymplectic reductions. It is worth noting that our techniques are illustrated by the study of relevant physical and mathematical systems that appear in the study of quantum mechanical systems, control systems, and other topics. 

The structure of the paper goes as follows. Section \ref{section:basic_concepts} presents the basic theory of Lie systems, including the Lie--Scheffers theorem, and some results on automorphic Lie systems. It also offers a review on multisymplectic structures and the notion of multisymplectic Lie system \cite{GLMV19} is presented. The last part of this section is devoted to a survey of the notions of unimodular Lie algebras and unimodular Lie groups.

In Section \ref{section:LS-IF} we review some results on locally automorphic Lie systems and give a procedure to find some of their invariants. We also work out two examples to better illustrate these concepts. The first is the so-called generalised Darboux--Brioschi--Halphen system, which appears in the study of triply orthogonal surfaces and vacuum Einstein equations for hyper-K\"ahler Bianchi-IX metrics \cite{CH03,Darboux,Halphen}. Additionally, it also occurs when reducing the self-dual Yang--Mills equations corresponding to an infinite-dimensional gauge group of diffeomorphisms of a
three-dimensional sphere \cite{CH03}. The second example is a first-order control system on $\mathbb{R}^5$ with two controls \cite{Ni00,Ra06}.

Section \ref{section:reduction} devises a new reduction procedure for multisymplectic Lie systems. We begin by working out a couple of introductory examples: the Schwarz equation and dissipative quantum harmonic oscillators. Having studied some examples, we develop a multisymplectic Lie systems reduction theory. These results are applied to the control system introduced in the previous section and to quantum harmonic oscillators with a spin-magnetic term.

A reconstruction procedure for multisymplectic Lie systems is developed in Section \ref{section:reconstruction}. This reconstruction is achieved by combining several different reductions in an appropriate way and allows us to recover the initial multisymplectic Lie system. This procedure is applied to the example of quantum harmonic oscillators with a spin-magnetic term introduced in the previous section.

\section{Some basic concepts and notations}
\label{section:basic_concepts}

Let us assume some general statements to hold throughout the work unless otherwise explicitly stated.
All mathematical objects are smooth, real, and globally defined.
This allows us to avoid non-relevant technical problems while stressing the main ideas of our theory. Hereafter, $N$ will represent an $n$-dimensional connected manifold. All manifolds are considered connected. The sum over crossed repeated indices is understood.

\subsection{Generalised distributions and \texorpdfstring{$t$}--dependent vector fields}
\label{subsection:generalised_distributions}

Consider a Lie algebra $V$ with a Lie bracket $[\cdot,\cdot]$. Given two subsets $\mathcal{A}, \mathcal{B} \subset V$,  we denote by $[\mathcal{A},\mathcal{B}]$ the linear space generated by the Lie brackets between elements of $\mathcal{A}$ and $\mathcal{B}$. Meanwhile, ${\rm Lie}(\mathcal{B})$ stands for the smallest Lie subalgebra of $V$ containing $\mathcal{B}$.

A {\it Stefan--Sussmann} (or {\it generalised}) {\it distribution}
on a
manifold $N$ is a correspondence $\mathcal{D}$  associating each $x\in N$ with a linear
subspace $\mathcal{D}_x\subset {\rm T}_xN$. 
The dimension of $\mathcal{D}_x$ is called the {\it rank} of $\mathcal{D}$ at~$x$.
It is said that $\mathcal{D}$ is {\it regular at} $x\in N$ when
its rank is constant on a neighbourhood of~$x$. 
Meanwhile, the generalised distribution
$\mathcal{D}$ is called \emph{regular} when its rank is constant on the whole~$N$. Regular generalised distributions are called \emph{distributions}. A vector field $Y\in \mathfrak{X}(N)$ 
takes values in~$\mathcal{D}$, let us write it $Y\in\mathcal{D}$, when
$Y_x\in\mathcal{D}_x$ for all $x\in N$. 

We write $\mathfrak{X}(N)$
for the $C^\infty(N)$-module of vector fields on~$N$. Meanwhile, $\Omega(N)$ and $\Omega^k(N)$ stand for the $C^\infty(N)$-modules of differential forms and differential $k$-forms on~$N$, respectively.
A set $\mathcal{V}\subset \mathfrak{X}(N)$ of vector fields on~$N$ allows us to define a generalised distribution $\mathcal{D}^\mathcal{V}$ on $N$ mapping each $x\in N$ to the linear span of all the values of its vector fields at $x$, i.e. 
$\mathcal{D}^\mathcal{V}_x = \mathrm{span}\{X_x \mid X \in \mathcal{V} \}$.

A {\it $t$-dependent vector field} on $N$ is a mapping
$X \colon (t,x) \in \mathbb{R} \times N  \mapsto  X(t,x) \in {\rm T}N$
so that $\tau_N\circ X=\pi_2$,
where
$\pi_2 \colon (t,x) \in \mathbb{R} \times N  \mapsto  x \in N$ and $\tau_N:{\rm T}N\rightarrow N$ is the canonical tangent bundle projection. A $t$-dependent vector field~$X$ on $N$
amounts to  
a $t$-parameterised family of vector fields
$\{X_t\}_{t\in\mathbb{R}}$ on~$N$,
with 
$X_t \colon x \in N  \mapsto  X(t,x) \in {\rm T}N$
\cite{Dissertationes}. This enables us to relate $t$-dependent vector fields to the following geometric structures.

An {\it integral curve} of $X$ 
is a curve $\gamma \colon \mathbb{R} \to N$
satisfying that 
\begin{equation}\label{Eq:Sys}
\frac{{\rm d}\gamma}{{\rm d} t}(t) = X(t,\gamma(t))
\,,
\quad 
\forall t\in \mathbb{R}
\,.
\end{equation}
Hence,
$\widetilde\gamma:t\in \mathbb{R}\mapsto (t,\gamma(t))\in \mathbb{R}\times N$
becomes an integral curve
of the {\it autonomisation} $\widetilde X$ of~$X$,
namely the vector field 
$\displaystyle
\widetilde X = \partial/\partial t + X$ 
on 
$\mathbb{R}\times N$ 
\cite{FM,Dissertationes}.
On the contrary,
if $\widetilde\gamma \colon \mathbb{R} \to \mathbb{R} \times N$ 
is an integral curve of the autonomisation~$\widetilde X$
and a section of the bundle 
$\pi_1 \colon (t,x) \in \mathbb{R} \times N  \mapsto  t \in \mathbb{R}$,
then
$\gamma = \pi_2 \circ \widetilde\gamma$
is an integral curve of~$X$. Note that $\widetilde{\gamma}$ is a section of the bundle $\pi_1$ if, and only if, $\pi_1\circ\widetilde{\gamma}={\rm Id}_\mathbb{R}$. It is worth noting, for clarity, that  
$\sigma(t)=(t+1,\gamma(t+1))$, where $\gamma(t)$ is a solution to (\ref{Eq:Sys}),  is not a section of $\pi_1$, but it is an integral curve of $\widetilde{X}$.

Every $t$-dependent vector field $X$ on a manifold $N$ gives rise to a unique system (\ref{Eq:Sys}) on $N$ describing its integral curves. 
Conversely, a system (\ref{Eq:Sys}) describes the integral curves $\widetilde\gamma:t\in \mathbb{R}\mapsto (t,\gamma(t))\in \mathbb{R}\times N$ of the autonomisation of a unique $t$-dependent vector field $X$ on~$N$. 
This allows us to use $X$ to identify both a $t$-dependent vector field on $N$ and its associated system (\ref{Eq:Sys}). This identification will simplify the terminology of our paper without leading to any misunderstanding.

\begin{definition}
\label{def:smallest_Lie_algebra}
The {\it smallest Lie algebra} of a $t$-dependent vector field~$X$ on~$N$
is the smallest (in the sense of inclusion) Lie algebra over the reals, $V^X$, 
including the vector fields $\{X_t\}_{t\in\mathbb{R}}$, namely
$V^X={\rm Lie}(\{X_t\}_{t\in\mathbb{R}})$. 
The {\it associated distribution} of~$X$, denoted by $\mathcal{D}^{V^X}$, is the generalised distribution on~$N$ spanned by the elements of the Lie algebra of vector
fields $V^X$. 
\end{definition}

It can only be ensured that the rank of $\mathcal{D}^{V^X}$
is
constant on the connected components of an open and dense subset of~$N$,
where $\mathcal{D}^{V^X}$ becomes regular, involutive, and integrable (see \cite{CLS13}).
The most relevant instance for us is when $V^X$ is a finite-dimensional Lie algebra and, therefore, the generalised distribution $\mathcal{D}^{V^X}$ becomes integrable on the whole~$N$ in the sense of Stefan--Sussmann 
(see \cite[p.\,63]{JPOT} for details).

\subsection{Lie systems}\protect\label{subsection:Lie systems}

Let us survey some fundamental notions in the theory of Lie systems to be used hereafter (see \cite{Dissertationes} for details).

\begin{definition}\label{def:superposition_rule} A {\it superposition rule} depending on $m$ particular solutions for a system~$X$ in $N$ 
is a function
$\Phi \colon N^{m} \times N \rightarrow N$,
$x=\Phi(x_{(1)}, \ldots,x_{(m)};\lambda)$,
so that the general solution, $x(t)$, of $X$ can be written as
$x(t)=\Phi(x_{(1)}(t), \ldots,x_{(m)}(t);\lambda),$
where $x_{(1)}(t),\ldots,x_{(m)}(t)$ is any generic family of
particular solutions of $X$ and $\lambda$ is a point in $N$ that is related to the initial conditions. 
A {\it Lie system} is a non-autonomous system of first-order ordinary differential equations that admits a superposition rule.
\end{definition}

The {\it Lie--Scheffers theorem} \cite{CGM07,LS} states the conditions that ensure that a system $X$ possesses a superposition rule:

\begin{theorem}
\label{th:characterization_superposition_rule}
A system $X$ on $N$ possesses a superposition rule if, and only if,
$X = {{\sum_{\alpha=1}^r}} b_\alpha(t) X_\alpha$,
for a
family $X_1,\ldots,X_r$ of vector fields on $N$ spanning
an $r$-dimensional Lie algebra and a set $b_1(t),\ldots,b_r(t)$ of $t$-dependent functions.
\end{theorem}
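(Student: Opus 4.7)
The plan is to prove both implications, using the classical correspondence between superposition rules and foliations invariant under diagonal prolongations of the vector fields $X_t$.

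For the sufficiency direction, suppose $X=\sum_{\alpha=1}^{r}b_\alpha(t)X_\alpha$ with $X_1,\ldots,X_r$ spanning an $r$-dimensional Lie algebra $V$. For each positive integer $m$, I would introduce the diagonal prolongations $\widetilde X_\alpha$ on $N^{m+1}$ defined by $\widetilde X_\alpha(x_{(0)},\ldots,x_{(m)})=X_\alpha(x_{(0)})\oplus\cdots\oplus X_\alpha(x_{(m)})$. The map $X_\alpha\mapsto\widetilde X_\alpha$ is a Lie algebra morphism, so $\widetilde V:=\mathrm{span}\{\widetilde X_1,\ldots,\widetilde X_r\}$ is an involutive finite-dimensional Lie algebra of vector fields on $N^{m+1}$, whose associated generalised distribution is integrable in the Stefan--Sussmann sense. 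Picking $m$ large enough so that $n(m+1)-r\geq n$, a Frobenius-type argument produces, on an open dense subset of $N^{m+1}$, exactly $n$ common first integrals $F_1,\ldots,F_n$ of the $\widetilde X_\alpha$ whose Jacobian with respect to $x_{(0)}$ is invertible at a generic point. Solving the system $F_i(x_{(0)},x_{(1)},\ldots,x_{(m)})=\lambda_i$ implicitly for $x_{(0)}$ then yields a map $x_{(0)}=\Phi(x_{(1)},\ldots,x_{(m)};\lambda)$, and because every $\widetilde X_\alpha$, and hence every $t$-linear combination thereof, annihilates the $F_i$, the function $\Phi$ is a superposition rule.

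For the necessity direction, assume $X$ admits a superposition rule $\Phi\colon N^m\times N\to N$. Fix a generic family of particular solutions $x_{(1)}(t),\ldots,x_{(m)}(t)$ and any $\lambda\in N$; differentiating $x(t)=\Phi(x_{(1)}(t),\ldots,x_{(m)}(t);\lambda)$ with respect to $t$ shows that the value $X_t(x)$ at the point $x=\Phi(x_{(1)},\ldots,x_{(m)};\lambda)$ is a fixed, $\lambda$-independent combination of the values $X_t(x_{(1)}),\ldots,X_t(x_{(m)})$ determined by the partial derivatives of $\Phi$. Equivalently, the level sets of the map $(x_{(0)},\ldots,x_{(m)})\mapsto\Phi(x_{(1)},\ldots,x_{(m)};x_{(0)})^{-1}$, obtained by fixing $\lambda$, define an $nm$-dimensional foliation $\mathcal{F}$ of an open subset of $N^{m+1}$ that is invariant under each diagonal prolongation $\widetilde X_t$. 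Each $\widetilde X_t$ is thus tangent to $\mathcal{F}$, and its restriction to any single leaf is uniquely determined by its projection onto the first factor. Consequently, the linear span of $\{\widetilde X_t\}_{t\in\mathbb{R}}$ embeds into a finite-dimensional space of sections of $T\mathcal{F}$, and one verifies that the same holds after taking Lie brackets, so that $V^X=\mathrm{Lie}(\{X_t\}_{t\in\mathbb{R}})$ is finite-dimensional. Choosing any basis $X_1,\ldots,X_r$ of $V^X$ and expanding $X_t$ in it yields the required functions $b_\alpha(t)$.

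The main obstacle I anticipate is in the necessity direction: rigorously producing the invariant foliation from the superposition rule, dealing with the open dense subset on which the construction is well defined, and upgrading the finite-dimensionality of $\mathrm{span}\{X_t\}_{t\in\mathbb{R}}$ to that of the full Lie-algebraic closure $V^X$. The sufficiency direction is mostly a bookkeeping exercise once the diagonal prolongation and the Stefan--Sussmann integrability recalled in Section~\ref{subsection:generalised_distributions} are in place.
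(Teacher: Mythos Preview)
The paper does not prove this theorem at all: it is stated as the classical Lie--Scheffers theorem and referenced to \cite{CGM07,LS}, with no argument given. So there is no ``paper's own proof'' to compare against; your proposal is essentially a sketch of the modern geometric proof found in those references (diagonal prolongations, invariant foliation of $N^{m+1}$, extraction of first integrals).

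That said, a few points in your necessity argument deserve tightening if you intend to write this out in full. The claim that ``the linear span of $\{\widetilde X_t\}_{t\in\mathbb{R}}$ embeds into a finite-dimensional space of sections of $T\mathcal{F}$'' is not what gives finiteness: sections of $T\mathcal{F}$ form an infinite-dimensional $C^\infty$-module. The actual mechanism is that each leaf of $\mathcal{F}$ projects diffeomorphically onto the product of the last $m$ factors, so a diagonally prolonged vector field tangent to $\mathcal{F}$ is completely determined on a leaf by its projection to $N^m$; reading off the $0$th component then shows that $X_t$ is determined by the tuple $(X_t(x_{(1)}),\ldots,X_t(x_{(m)}))$ for a generic choice of base points. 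This gives an injection of $\mathrm{span}\{X_t\}$ into a finite-dimensional vector space. The passage to the Lie closure $V^X$ then works because tangency to a regular foliation is preserved under Lie bracket, so every iterated bracket of the $X_t$ satisfies the same determination property. Your sketch gestures at this but conflates the foliation-tangency with the finite-dimensionality step; make the local diffeomorphism of leaves onto $N^m$ explicit and the argument goes through cleanly. The sufficiency direction is fine as outlined.
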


In other words,
$X$ possesses a superposition rule if, and only if,
 $\dim V^X<\infty$. 
Then, a Lie system can always be written as 
$X=\sum_{\alpha=1}^rb_\alpha(t)X_\alpha$, where $X_1,\ldots,X_r$ span a Lie algebra $V$ that may strictly contain $V^X$. 
Then, $V$ is called a
{\it Vessiot--Guldberg Lie algebra} of~$X$. It should be noted that $V$ may not be univocally defined for a system $X$, while $V^X$ is always characterised by $X$  \cite{Dissertationes}.

Previous comments allow us to hereafter denote a \emph{Lie system} as a triple 
$(N,X,V)$,
where $N$ is a manifold and $V$ is a Vessiot--Guldberg Lie algebra of a $t$-dependent vector field,~$X$, on~$N$.

Let us show how the integration of a Lie system on a manifold 
can be reduced to integrate one of its Vessiot-Guldberg Lie algebras to a Lie group action and to know a particular solution of a Lie system on a Lie group of a very specific type, a so-called automorphic Lie system \cite{CGM00,Ve1904}. With this aim, let us recall some basic results regarding Lie group actions.

Let 
$\varphi \colon D\subset G \times N \to N$ be a local (left) Lie group action
and define
$\varphi_{x}:g\in D_G\subset G\mapsto \varphi(g,x)\in N$ so that $D_G=\{g\in G\,|\, (g,x)\in D\}$ \cite{Palais}.
For every $\xi \in \mathrm{T}_eG$,
we set the fundamental vector field $\xi_N \in \mathfrak{X}(N)$ to be
$\xi_N(x) ={\rm T}_e \varphi_{x} (\xi)$. Let us denote by $\xi^L$ and $\xi^R$ 
the respective invariant vector fields associated with 
$\xi \in \mathrm{T}_eG$.
Hence, there are linear isomorphisms $\xi\in \mathrm{T}_eG\mapsto \xi^L\in \mathfrak{X}_L(G)$ and  $\xi\in \mathrm{T}_eG\mapsto \xi^R\in \mathfrak{X}_R(G)$, where $\mathfrak{X}_L(G)$ and $\mathfrak{X}_R(G)$ are the linear spaces of left- and right- invariant vector fields on~$G$.
The space
$\mathfrak{g} = \mathrm{T}_eG$  
inherits a Lie algebra structure from $\mathfrak{X}_L(G)$ via the identification given by the  linear isomorphism $\xi\in \mathrm{T}_eG\mapsto \xi^L\in \mathfrak{X}_L(G)$.
Then
(cf.\ \cite[Ch.\,20]{Lee}),
\begin{enumerate}
\itemsep 0pt
\item 
The map 
$\widehat\varphi \colon \mathfrak{g} \to \mathfrak{X}(N)$,
$\xi \mapsto \xi_N$,
is a Lie algebra antihomomorphism
(the \emph{infinitesimal generator} of~$\varphi$).
\item
For every $x \in N$,
the \emph{right}-invariant vector field $\xi^R$
and the fundamental vector field $\xi_N$ 
are $\varphi_x$-related \cite[p.\,182]{Lee}, namely $T_g\varphi_x(\xi^R(g))=\xi_N(\varphi_x(g))$ for every $g\in G$.
\end{enumerate}

In general, an antihomomorphism 
$\mathfrak{g} \to \mathfrak{X}(N)$ 
is called a (left) Lie algebra action, and,
when $\mathfrak{g}$ is finite-dimensional,
it can be integrated to a Lie group action.
More precisely:

\begin{theorem}\label{Th:Integration}
Let $\mathfrak{g}$ be a finite-dimensional Lie algebra,
and let $G$ be a Lie group with Lie algebra $\mathfrak{g}$.
Given a Lie algebra morphism $\widehat\varphi:\mathfrak{g}\rightarrow \mathfrak{X}(N)$,
there exists a local Lie group action 
$\varphi:G\times N\rightarrow N$
whose infinitesimal generator is~$\widehat\varphi$.
If $G$ is connected and simply connected and the vector fields $\widehat\varphi(\mathfrak{g})$ are complete,
then $\varphi$ becomes a global Lie group action. 
\end{theorem}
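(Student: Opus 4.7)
The plan is to reproduce the classical Palais-type integration argument: the graph of $\varphi$ will be realised as an integral manifold of an involutive distribution on $G \times N$, and the action property will be extracted from the right-invariance of the vector fields $\xi^R$. Concretely, on $M := G \times N$ I lift each $\xi^R \in \mathfrak{X}(G)$ and each $\widehat\varphi(\xi) \in \mathfrak{X}(N)$ through the two factor projections and set
\begin{equation*}
\widetilde\xi \;:=\; \xi^R + \widehat\varphi(\xi) \;\in\; \mathfrak{X}(M).
\end{equation*}
Since vector fields lifted from different factors commute, and since both $\xi \mapsto \xi^R$ and $\xi \mapsto \widehat\varphi(\xi)$ are Lie algebra (anti)homomorphisms with the same sign, $\xi \mapsto \widetilde\xi$ is again a Lie algebra (anti)homomorphism. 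As the $\xi^R$ are pointwise linearly independent on $G$, the linear span $\mathcal{D} := \mathrm{span}\{\widetilde\xi : \xi \in \mathfrak{g}\}$ is an involutive distribution of constant rank $\dim \mathfrak{g}$ on $M$.

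Frobenius' theorem then produces a foliation $\mathcal{F}$ integrating $\mathcal{D}$. The identity $\mathrm{T}\pi_1(\widetilde\xi) = \xi^R$ shows that $\pi_1 : M \to G$ restricts on each leaf to a local diffeomorphism. Write $L_x$ for the leaf through $(e,x)$: on a neighbourhood of $e$ in $G$ there is a smooth local inverse $g \mapsto (g,\varphi(g,x))$, and this is the definition of $\varphi$. For the action identity $\varphi(gh,x) = \varphi(g,\varphi(h,x))$, the right-invariance $(R_h)_*\xi^R = \xi^R$ gives $(\Psi_h)_*\widetilde\xi = \widetilde\xi$ where $\Psi_h := R_h \times \mathrm{id}_N$, so $\Psi_h$ permutes the leaves of $\mathcal{F}$. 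Since $\Psi_{h^{-1}}$ sends $(h,\varphi(h,x)) \in L_x$ to $(e,\varphi(h,x))$, one has $\Psi_{h^{-1}}(L_x) = L_{\varphi(h,x)}$; evaluating at any $(gh,y) \in L_x$ yields $(g,y) \in L_{\varphi(h,x)}$, which is precisely $\varphi(gh,x) = \varphi(g,\varphi(h,x))$.

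Finally, the infinitesimal generator is read off directly: the integral curve of $\widetilde\xi$ through $(e,x)$ splits as $t \mapsto (\exp(t\xi),\gamma(t))$ with $\dot\gamma = \widehat\varphi(\xi)\circ\gamma$, because $t \mapsto \exp(t\xi)$ is the integral curve of $\xi^R$ through $e$; hence $\varphi(\exp(t\xi),x) = \gamma(t)$ and $\mathrm{T}_e\varphi_x(\xi) = \widehat\varphi(\xi)(x)$, as required. Under the assumption that $G$ is connected and simply connected and that each $\widehat\varphi(\xi)$ is complete, each $\widetilde\xi$ is complete on $M$; consequently $\pi_1|_{L_x} : L_x \to G$ is a surjective local diffeomorphism with the unique path-lifting property, i.e.\ a covering map, which simple-connectedness of $G$ forces to be a diffeomorphism. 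This extends $\varphi$ globally to $G \times N$. The most delicate point is the verification of the action property: it is exactly where right-invariance of $\xi^R$, rather than left-invariance, is essential, and choosing the opposite lift on the $G$-factor would break the argument.
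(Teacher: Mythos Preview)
Your argument is correct and is precisely the classical Palais construction: the paper does not supply its own proof of this theorem but simply refers the reader to \cite[p.\,58]{Palais} and \cite[p.\,207]{Bo68}, and what you have written is exactly the foliation argument on $G\times N$ that one finds there. The only point worth flagging is terminological: in the paper's conventions the infinitesimal generator of a \emph{left} action is a Lie algebra \emph{anti}homomorphism (so both $\xi\mapsto\xi^R$ and $\widehat\varphi$ reverse brackets), which is consistent with your parenthetical ``(anti)homomorphism'' but should be stated unambiguously; the theorem's phrase ``Lie algebra morphism'' is to be read in that sense.
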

The proof of these results, and other related facts, can be found, for instance, in
\cite[p.\,58]{Palais},
\cite[p.\,207]{Bo68}.
This Lie group action is the device that relates the Lie system on~$N$ to a Lie system on~$G$:

\begin{theorem}
\label{Th:XG-X}
Let $(N,X,V)$ be a Lie system given by
$X = {{\sum_{\alpha=1}^r}} b_\alpha(t) X_\alpha$,
where
$X_1,\ldots,X_r$ is a basis of the Vessiot--Guldberg
Lie algebra~$V$ and $b_1(t),\ldots,b_r(t)$ are arbitrary $t$-dependent functions.
Let $G$ be a Lie group whose Lie algebra is isomorphic to $V$,
and let $\varphi:D_G\in G\times N\rightarrow N$ be a local Lie group action 
as given by Theorem \ref{Th:Integration}.
If $X_\alpha^R$ is the right-invariant vector field on~$G$
related to the vector field $X_\alpha$ through $\varphi$, with $\alpha=1,\ldots,r$,
then
\begin{enumerate}
\item 
The triple 
$(G,X^G,V^G)$,
where
\begin{equation}
\label{Eq:EquLie}
X^G(t,g) = \sum_{\alpha=1}^r b_\alpha(t) X_\alpha^R(g),\qquad \forall g\in G,\quad \forall t\in \mathbb{R},
\end{equation}
and
$V^G = \mathfrak{X}_R(G)$,
is a Lie system on~$G$.
\item
For every $x_0 \in N$ and $t\in \mathbb{R}$,
the vector field
$X_t^G$ is $\varphi_{x_0}$-related with~$X_t$; 
namely, 
the $t$-dependent vector fields $X^G$ and $X$ are $\varphi_{x_0}$-related.
\item
If $g(t)$ is the integral curve of $X^G$ with $g(0)=e$,
and $x_0 \in N$,
then 
$x(t) = \varphi(g(t),x_0)$
is the integral curve of~$X$ with $x(0)=x_0$.
\end{enumerate}
\end{theorem}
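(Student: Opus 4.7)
The plan is to assemble the three claims from the Lie--Scheffers theorem together with two standard facts about $\varphi$-related vector fields: the $\mathbb{R}$-linearity of the relation, and that $\varphi$-related vector fields push integral curves to integral curves. The key input, already recorded in the excerpt immediately before Theorem~\ref{Th:Integration}, is that for every $\alpha$ the right-invariant vector field $X_\alpha^R$ and the fundamental vector field $X_\alpha$ are $\varphi_x$-related for every $x \in N$.

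For item~(1), I would observe that $\mathfrak{X}_R(G)$ is an $r$-dimensional real Lie algebra, and that $\{X_\alpha^R\}_{\alpha=1}^r$ is a basis of it: the map $\xi \in \mathfrak{g} \mapsto \xi^R \in \mathfrak{X}_R(G)$ is a linear isomorphism, and the elements $\xi_\alpha \in \mathfrak{g}$ with $X_\alpha = \widehat\varphi(\xi_\alpha)$ form a basis of $\mathfrak{g}$. Since $X^G$ is a $t$-dependent $\mathbb{R}$-linear combination of this basis, Theorem~\ref{th:characterization_superposition_rule} immediately yields that $(G, X^G, V^G)$ is a Lie system with Vessiot--Guldberg Lie algebra $V^G$.

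For item~(2), fix $t \in \mathbb{R}$ and $g$ in the domain of $\varphi_{x_0}$. By linearity of the tangent map together with the recalled $\varphi_{x_0}$-relation between $X_\alpha^R$ and $X_\alpha$,
\begin{equation*}
\mathrm{T}_g \varphi_{x_0}\bigl(X^G_t(g)\bigr) = \sum_{\alpha=1}^r b_\alpha(t)\,\mathrm{T}_g\varphi_{x_0}\bigl(X_\alpha^R(g)\bigr) = \sum_{\alpha=1}^r b_\alpha(t)\,X_\alpha\bigl(\varphi_{x_0}(g)\bigr) = X_t\bigl(\varphi_{x_0}(g)\bigr),
\end{equation*}
so $X^G_t$ and $X_t$ are $\varphi_{x_0}$-related. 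For item~(3), set $x(t) = \varphi(g(t),x_0)$; the chain rule, the fact that $g(t)$ is an integral curve of $X^G$, and item~(2) give
\begin{equation*}
\frac{\mathrm{d}x}{\mathrm{d}t}(t) = \mathrm{T}_{g(t)} \varphi_{x_0}\!\left(\frac{\mathrm{d}g}{\mathrm{d}t}(t)\right) = \mathrm{T}_{g(t)} \varphi_{x_0}\bigl(X^G_t(g(t))\bigr) = X_t\bigl(x(t)\bigr),
\end{equation*}
while $x(0) = \varphi(e,x_0) = x_0$ is immediate from the definition of a Lie group action. The only technical point to monitor is the local character of $\varphi$: one should check that $g(t)$ remains in the domain of $\varphi_{x_0}$ on a neighbourhood of $t=0$, which is clear by continuity since that domain is an open subset of $G$ containing $e$. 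I do not expect this to be a genuine obstacle; rather, the whole proof is essentially bookkeeping around standard properties of group actions and the Lie--Scheffers characterization.
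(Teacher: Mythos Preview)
Your proposal is correct and follows essentially the same approach as the paper, which gives only a brief sketch: the $X_\alpha^R$ span $\mathfrak{X}_R(G)$, the $t$-dependent vector field $X^G$ is $\varphi_{x_0}$-related with $X$, and hence $\varphi_{x_0}$ maps integral curves to integral curves. You have simply unpacked each of these points with the expected linearity and chain-rule computations, and your remark on the local domain of $\varphi_{x_0}$ is a reasonable addition.
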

The proof of this result is almost immediate:
the vector fields $X_\alpha^R$ span the Lie algebra
$\mathfrak{X}_R(G)$, the $t$-dependent vector field
$X^G$ is $\varphi_{x_0}$-related with~$X$,
and this implies that 
$\varphi_{x_0}$ maps integral curves of $X^G$ to integral curves of~$X$ (see
\cite{CGM00,Dissertationes}
for details).

Hence, knowing the explicit form of $\varphi$ allows us to obtain the solutions to~$X$
via a single particular solution to~(\ref{Eq:EquLie}).
The other way around, the general solution of $X$ determines  
the integral curve of (\ref{Eq:EquLie}) with $g(0)=e$ 
using an algebraic system of equations determined by~$\varphi$. 

Lie systems (\ref{Eq:EquLie}) are frequently called 
\emph{automorphic Lie systems} 
\cite{BS08,Ve1904}. 
Due to their specific structure, automorphic Lie systems possess invariant differential forms with respect to their evolution (see \cite{GLMV19} for details). 
From now on, we say that the automorphic Lie system 
$(G,X^G,\mathfrak{X}_R(G))$ is the automorphic Lie system
 \emph{related} to $(N,X,V)$.

The automorphic Lie system related to a $t$-dependent right-invariant vector field (\ref{Eq:EquLie})
is invariant relative to right multiplications 
$R_g: g'\in G \rightarrow g'g\in G$. 
Then, if $g_1(t)$ is the particular solution to the system $X^G$ with an initial condition $g_1(0)\in G$, then $R_gg_1(t)$ is the particular solution of $X^G$ with the initial condition $g_1(0)g$. 
Consequently, the general solution to $X^G$ can be brought into the form  
$g(t)=\Phi^G(g_1(t);g) := R_g \,g_1(t)$, 
which leads to the definition of the superposition rule 
$\Phi^G \colon (g_1;g) \in G\times G \mapsto R_gg_1 \in G$, which  
depends on a unique particular solution \cite{Dissertationes}.

\subsection{Multisymplectic manifolds and Lie systems}\label{subsection:multisymplectic}

This section reviews the fundamental properties of multisymplectic manifolds to be used hereafter
(see \cite{CIL-96a,CIL-96b,EIMR-2012} for details). 

A differential $k$-form $\omega$ on~$N$
is called \emph{one-nondegenerate}
if, and only if, the vector bundle morphism
\[
\begin{array}{rrcl}
\omega^\flat \colon & \mathrm{T}N & \to & \mathsf{\Lambda}^{k-1} \mathrm{T}^*N
\\
& X_p & \mapsto & \iota_{X_p} \omega_p,
\end{array}
\]
where $X_p\in \mathrm{T}_pN$ and $\iota_{X_p}\omega_p$ stands for an inner contraction, is injective.
If this is the case,
the induced morphism of $\mathcal{C}^\infty(N)$-modules
$\widehat\omega\colon\mathfrak{X}(N) \to \Omega^{k-1}(N)$,
$X  \mapsto  \iota_X \omega$,
is also injective.

\begin{definition}\label{def:Mult}
A \emph{multisymplectic $k$-form} on an $n$-dimensional manifold $N$ is a $1$-nondegenerate and closed differential 
$k$-form $\Theta \in \Omega^k(N)$.
A \emph{multisymplectic manifold} of degree $k$ is a pair $(N,\Theta)$,
where $\Theta$ is a multisymplectic $k$-form on~$N$.
\end{definition}

The multisymplectic two-forms are the symplectic forms.
Multisymplectic $n$-forms on $N$ are the volume forms on~$N$.
We hereafter assume that $\dim N\geq 2$ and, thus, every multisymplectic $k$-form has degree $k \geq 2$.
The `multisymplectic' term can be misleading and it should not be confused with other structures, such as $k$-symplectic and polysymplectic \cite[p.\,21]{dLeon2015}, where several differential forms appear. Anyhow, we will stick to the standard nomenclature, which has been established for almost fifty years now \cite{K73,KT79}.

\begin{definition}\label{def:multi_hamvf}
Let $(N,\Theta)$ be a multisymplectic manifold of degree~$k$.
A vector field $X$ on $N$ is \textit{locally Hamiltonian} if $\iota_X\Theta$ is closed.
A vector field $X$ is \textit{Hamiltonian} if $\iota_X\Theta$ is exact, i.e. $\iota_X\Theta=d\Upsilon$ for a differential $(k-2)$-form $\Upsilon_X$ on~$N$.
In previous cases, $\iota_X\Theta$ and $\Upsilon_X$ are called  \textit{Hamiltonian} $(k-1)$- and $(k-2)$-forms associated with~$X$, respectively.
\end{definition}

A vector field is locally Hamiltonian if, and only if, $\Theta$ is invariant by~$X$, that is,
\[
\mathcal{L}_X \Theta = 0\,,
\]
where $\mathcal{L}_X\Theta$ stands for the  Lie derivative of $\Theta$ relative to $X$. 

To keep the notation simple, we will frequently call $\iota_X\Omega$ and $\Upsilon_X$ Hamiltonian forms associated with $X$. Note that the degree is obvious from the context and does not need to be detailed. 

In analogy with symplectic geometry, it is possible to define certain brackets on the spaces of Hamiltonian forms on multisymplectic manifolds. Notwithstanding, their properties generally differ significantly from the properties of the brackets defined in symplectic geometry.

\begin{definition}\label{def:AH}
Let $(N,\Theta)$ be a multisymplectic manifold of degree~$k$.
\begin{itemize}
\item
A {\it bracket between Hamiltonian $(k-1)$-forms} can be set as follows. Let
$\xi,\zeta \in {\rm Im}\,\widehat{\Theta} \subset \Omega^{k-1}(N)$, 
and let $X,Y\in\mathfrak{X}(N)$ be the unique vector fields such that $\iota_{X}\Theta=\xi$ and $\iota_{Y}\Theta=\zeta$.
The {\sl bracket between $\xi$ and $\zeta$} is defined by
\begin{equation}
\{\xi,\zeta\} = \iota_{[Y,X]}\Theta\ \in {\rm Im}\,\widehat{\Theta}
\,.
\label{braform}
\end{equation}
This bracket satisfies the Jacobi identity and, therefore, becomes a Lie bracket.

\item
The {\sl bracket between Hamiltonian $(k-2)$-forms} will be now defined.
Let $X,Y$ be Hamiltonian vector fields and let 
$\Upsilon_X,\Upsilon_Y \in \Omega^{k-2}(N)$ 
be Hamiltonian forms of $X$ and $Y$, respectively.  
Then, 
$$
\{\Upsilon_X,\Upsilon_Y\} = \iota_Y\iota_X\Theta \,
$$
gives rise to a bracket on $(k-2)$-forms. It can be proved that the bracket of Hamiltonian $(k-2)$-forms needs not be a Lie bracket for $k>2$ \cite{CIL-96a}.
\end{itemize}
\end{definition}

Although the brackets for Hamiltonian $(k-1)$ and $(k-2)$-forms have been denoted in the same way, 
this will not lead to confusion and will simplify the notation. 

Note that
\begin{equation}
\d\{\Upsilon_X,\Upsilon_Y\}=
\d\iota_Y\iota_X\Theta=
\iota_{[Y,X]}\Theta=
\{\d\Upsilon_X,\d\Upsilon_Y\} \,.
\label{relation1}
\end{equation}
The equality
$\d\{\Upsilon_X,\Upsilon_Y\}=\iota_{[Y,X]}\Theta$, shows that $[Y,X]$ is a Hamiltonian vector field
admitting a Hamiltonian form $\{\Upsilon_X,\Upsilon_Y\}$. 
Consequently, the space of (locally) Hamiltonian vector fields (${\rm Ham}_{\mathrm{loc}}(N)$) ${\rm Ham}(N)$ 
becomes a Lie algebra, 
and the map attaching a (locally) Hamiltonian vector field $X$ to $\iota_{X}\Theta$ is an injective Lie algebra anti-homomorphism. 

\medskip

Let us recall the notion of a multivector field, 
which is needed in our reduction procedure for multisymplectic Lie systems
(see \cite{EMR98,FR} for more details).
An \textit{$\ell$-multivector field} on $N$ is a section of the bundle  $\mathsf{\Lambda}^\ell(\mathrm{T}N)$.
An $\ell$-multivector field $Y$ on $N$ is said to be \emph{decomposable} if there exists a family of vector fields
$Y_1,\ldots, Y_\ell \in \mathfrak{X}(N)$ 
such that 
$Y = Y_1 \wedge \ldots \wedge Y_\ell$.
We will denote by $\mathfrak{X}^\ell(N)$
the set of $\ell$-multivector fields.

Let $(N,\Theta)$ be a multisymplectic manifold of degree~$k$.
An $\ell$-multivector field $Y$ is \textit{Hamiltonian} 
(with respect to~$\Theta$)
if there exists a $(k-\ell-1)$-form $\theta$ such that 
$\iota_Y \Theta = \d\theta$. This notion allows us to generalise the notion of Hamiltonian vector fields to multivector fields. 
Furthermore, $Y$ is \textit{locally Hamiltonian} or \textit{multisymplectic} if 
$\mathcal{L}_Y \Theta = 0$ (see \cite{CIL-96a,CIL-96b}).

Finally, let us recall the notion of a multisymplectic Lie system \cite{GLMV19}.
\begin{definition}
    A (locally) multisymplectic Lie system is a triple $(N,\Theta,X)$, where $X$ is a Lie system whose smallest Lie algebra $V^X$ is a finite-dimensional real Lie algebra of (locally) Hamiltonian vector fields relative to a multisymplectic structure $\Theta$ on~$N$.
\end{definition}

\subsection{Unimodular Lie algebras}
\label{subsection:unimodular}

This section surveys the fundamentals of unimodular Lie algebras and Lie groups to be used hereafter. 

Let $G$ be a Lie group with Lie algebra $\mathfrak{g}=\mathrm{T}_eG$. 
A (left) {\it Haar measure} on~$G$ is a left-invariant volume form on~$G$ \cite{Ja98}. 
Every Lie group admits a Haar measure given by a left-invariant volume form, and it is unique up to a non-zero multiplicative constant 
(cf.\ \cite{Bu13}).

Let $\{X^L_1,\ldots, X_r^L\}$ be a basis of~the Lie algebra $\mathfrak{X}_L(G)$ of left-invariant vector fields on~$G$ and
let
$\{\eta^L_1,\ldots,\eta_r^L\}$ 
be its dual basis of left-invariant differential one-forms.
Then, any left-invariant volume form on~$G$ is a non-zero scalar multiple of
$$
\Theta = \eta_1^L \wedge\ldots\wedge \eta_r^L
\,.
$$
If $X^L$ is a left-invariant vector field on~$G$, then \cite{GLMV19}:
\begin{equation}
\label{Eq:exp}
\mathcal{L}_{X^L} \Theta = -{\rm Tr}({\rm ad}_{X^L}) \,\Theta
\,;
\end{equation}
where ${\rm Tr}$ stands for the trace of an endomorphism,
and 
${\rm ad}\colon v\in \mathfrak{g} \mapsto  {\rm ad}_v\in {\rm End}(\mathfrak{g})$, 
is the adjoint representation of a Lie algebra~$\mathfrak{g}$ and
given by 
${\rm ad}_v w = [v,w]$.

A Lie group is called {\it unimodular} if its Haar measure is {\it bi-invariant}, namely it is left- and right-invariant \cite{Mi76}. For instance, Abelian,  compact, and semi-simple Lie groups, are all unimodular \cite{Yo15}. This work focusses on the Lie algebras of unimodular Lie groups, whose most important properties (for our purposes) are detailed in the definition and proposition below (for details, see \cite{GLMV19}).

\begin{definition}
\label{def:unimodular_lie_algebra}
A finite-dimensional Lie algebra $\mathfrak{g}$ is called {\it unimodular} when the maps 
${\rm ad}_v \in {\rm End}(\mathfrak{g})$
are traceless
---we say then that the adjoint representation is {\it traceless}.
\end{definition}

\begin{proposition}
\label{prop:unimodular_LG}
A (connected) Lie group $G$ is unimodular if, and only if, its Lie algebra is unimodular.
\end{proposition}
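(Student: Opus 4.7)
The plan is to translate unimodularity of $G$ into an infinitesimal statement and then apply the key identity \eqref{Eq:exp}. Let $\Theta = \eta^L_1\wedge\cdots\wedge\eta^L_r$ be a left-invariant volume form on $G$; by construction it is left-invariant, so $G$ is unimodular precisely when $\Theta$ is in addition right-invariant, i.e.\ when $R_g^*\Theta = \Theta$ for every $g\in G$.

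First, I would convert right-invariance of $\Theta$ into a Lie-derivative condition along left-invariant vector fields. The crucial observation is that the flow at time $t$ of a left-invariant vector field $X^L$ with $X^L(e)=\xi$ is right multiplication by $\exp(t\xi)$, namely $\phi^{X^L}_t = R_{\exp(t\xi)}$. Hence
\[
\frac{\d}{\d t}\bigg|_{t=0} R_{\exp(t\xi)}^*\Theta = \mathcal{L}_{X^L}\Theta,
\]
so $R_{\exp(t\xi)}^*\Theta=\Theta$ for all $\xi\in\mathfrak{g}$ and all $t\in\mathbb{R}$ if and only if $\mathcal{L}_{X^L}\Theta = 0$ for every $X^L\in\mathfrak{X}_L(G)$. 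Since $G$ is connected, it is generated (as a group) by $\exp(\mathfrak{g})$, so every right translation is a finite product of such one-parameter right translations; therefore $\Theta$ is right-invariant if and only if $\mathcal{L}_{X^L}\Theta=0$ for all $X^L\in\mathfrak{X}_L(G)$.

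Next I would plug in the identity~\eqref{Eq:exp}, which gives $\mathcal{L}_{X^L}\Theta = -\mathrm{Tr}(\mathrm{ad}_{X^L})\,\Theta$. Because $\Theta$ is a volume form and thus nowhere vanishing, the equation $\mathcal{L}_{X^L}\Theta=0$ holds if and only if $\mathrm{Tr}(\mathrm{ad}_{X^L})=0$. Using the canonical isomorphism $\mathfrak{g}\cong\mathfrak{X}_L(G)$ via $\xi\mapsto\xi^L$, this condition for all $X^L$ is equivalent to $\mathrm{Tr}(\mathrm{ad}_v)=0$ for all $v\in\mathfrak{g}$, which is precisely the definition (Definition~\ref{def:unimodular_lie_algebra}) of $\mathfrak{g}$ being unimodular. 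Stringing the equivalences together yields the proposition.

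The routine obstacle in the argument is the use of connectedness: I would make sure to justify that right-invariance under the one-parameter subgroups $R_{\exp(t\xi)}$ extends to right-invariance under every $R_g$. The standard fact that a connected Lie group is generated by any neighborhood of the identity, together with $\exp$ being a local diffeomorphism at $e$, covers this. No essentially difficult step arises beyond this; everything else is a direct application of \eqref{Eq:exp} and the non-vanishing of $\Theta$.
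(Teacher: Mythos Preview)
Your argument is correct. The paper does not actually supply a proof of this proposition: it is stated as background and the reader is referred to \cite{GLMV19} for details. Your approach---reducing bi-invariance of the Haar form $\Theta$ to the vanishing of $\mathcal{L}_{X^L}\Theta$ via the flow identification $\phi^{X^L}_t=R_{\exp(t\xi)}$, then invoking \eqref{Eq:exp} and the non-vanishing of $\Theta$---is the standard one and is exactly the argument the paper's surrounding material (formula \eqref{Eq:exp} and Remark~\ref{Rem:Important}) is set up to support. The use of connectedness to pass from invariance under $R_{\exp(t\xi)}$ to invariance under all $R_g$ is handled correctly.
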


\begin{remark}
\label{Rem:Important}
It is worth noting that each
left-invariant vector field $X$ on~$G$ has a flow of the form $\phi:t\in\mathbb{R}\mapsto R_{\exp(t X_e)}\in {\rm Diff}(G)$. Then, 
a vector field $Y$ on a connected Lie group~$G$ is right-invariant 
if, and only if, it commutes with \emph{every} left-invariant vector field~$X^L$, namely
$\mathcal{L}_{X^L} Y = 0$.
This also applies to tensor fields on~$G$.
\end{remark}

\begin{definition}\label{def:grassman}
Let  $\mathfrak{g}$ be an $n$-dimensional  Lie algebra. We write $\mathcal{G}(r,\mathfrak{g})$ for the set of decomposable $r$-multivectors spanning $r$-dimensional unimodular Lie subalgebras. The set $\mathcal{G}(\mathfrak{g})=\bigcup_{r=0}^n\mathcal{G}(r,\mathfrak{g})$ is called the \textit{unimodular Grassmannian} of~$\mathfrak{g}$.
\end{definition}

\section{Locally automorphic Lie systems and invariant forms}
\label{section:LS-IF}

In this section, we recall conditions 
that ensure the existence of a multisymplectic form~$\Theta$ 
invariant with respect to the elements of a Vessiot--Guldberg Lie algebra~$V$.

In general, it is difficult to find multisymplectic forms compatible with a Lie system $X$ admitting a Vessiot--Guldberg Lie algebra~$V$, because this requires finding adequate solutions $\Theta$
to a system of partial differential equations 
$\mathcal{L}_Y\Theta=0$ for every $Y\in V$. 
However, we can develop several simpler methods to find compatible invariant forms 
for a special class of Lie systems with many relevant physical applications: 
the so-called locally automorphic Lie systems.

\subsection{Locally automorphic Lie systems}

\begin{definition} 
A {\it locally automorphic Lie system} is a triple $(N,X,V)$, 
where $X$ is a Lie system on the manifold $N$ with a Vessiot--Guldberg Lie algebra~$V$ 
such that $\dim V=\dim N$ and $\mathcal{D}^V={\rm T}N$.
\end{definition}

The following result, whose proof can be found in \cite{GLMV19}, shows that locally automorphic Lie systems are locally diffeomorphic to automorphic Lie systems, thus motivating the previous definition. 

\begin{theorem}
\label{Trivial}
Consider a locally automorphic Lie system $(N,X,V)$. Let $G$ be a Lie group whose Lie algebra is isomorphic to~$V$,
let $\varphi$ be a local action of~$G$ on~$N$
obtained from the integration of~$V$,
and let $(G,X^G,V^G)$ be the corresponding automorphic Lie system on~$G$ given by Theorem \ref{Th:XG-X}.
Then, for every $x \in N$, the map 
$\varphi_x = \varphi(\cdot,x)$
is a local diffeomorphism such that ${\rm T}_g\varphi_{x}(X^G_t(g))=X_t(\varphi_x(g))$ for every $t\in \mathbb{R}$.
\end{theorem}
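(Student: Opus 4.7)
The plan is to handle the two assertions separately. The $\varphi_x$-relatedness of $X^G_t$ and $X_t$ claimed in the theorem is exactly item~2 of Theorem~\ref{Th:XG-X} and so comes for free; the genuinely new content is the local-diffeomorphism statement, and that is where I will focus. To prove $\varphi_x$ is a local diffeomorphism at every $g$ in its domain $D_G$, I would first establish it at the identity $e\in G$ using an infinitesimal (dimension-counting) argument, and then propagate it to arbitrary $g$ using the left-action axiom.

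For the base case at $e$, I would observe that $\mathrm{T}_e\varphi_x\colon\mathfrak{g}\to\mathrm{T}_xN$ factors as
\[
\mathrm{T}_e\varphi_x(\xi) \;=\; \xi_N(x) \;=\; \widehat\varphi(\xi)(x) \;=\; (\mathrm{ev}_x\circ\widehat\varphi)(\xi),
\]
where $\widehat\varphi\colon\mathfrak{g}\to V$ is the Lie-algebra isomorphism fixed by the choice of $G$ and $\mathrm{ev}_x\colon V\to\mathrm{T}_xN$, $Y\mapsto Y(x)$, is the evaluation. The locally-automorphic hypothesis $\mathcal{D}^V_x=\mathrm{T}_xN$ makes $\mathrm{ev}_x$ surjective, and the equality $\dim V=\dim N$ forces it to be bijective; hence $\mathrm{T}_e\varphi_x$ is a linear isomorphism and the inverse function theorem provides an open neighbourhood of $e$ on which $\varphi_x$ is a diffeomorphism onto its image.

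To bootstrap from $e$ to an arbitrary $g\in D_G$, I would use the local-action identity $\varphi(gh,x)=\varphi(g,\varphi(h,x))$, which rewrites as
\[
\varphi_x\circ L_g \;=\; \overline{\varphi}_g\circ\varphi_x
\]
on a sufficiently small neighbourhood of $e$, where $L_g$ is left-translation in $G$ and $\overline\varphi_g(y)=\varphi(g,y)$. Both factors on the right are local diffeomorphisms: $L_g$ trivially, and $\overline\varphi_g$ because, for a local Lie group action, $\overline\varphi_g$ and $\overline\varphi_{g^{-1}}$ are mutual local inverses where both are defined. Composing with the already-established local-diffeomorphism property of $\varphi_x$ at $e$, differentiating at $h=e$ yields that $\mathrm{T}_g\varphi_x$ is a linear isomorphism, so $\varphi_x$ is a local diffeomorphism at~$g$.

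The main obstacle I anticipate is not conceptual but bookkeeping: since $\varphi$ is only a \emph{local} action, the identity $\varphi_x\circ L_g=\overline\varphi_g\circ\varphi_x$ holds only on a small enough neighbourhood, and one has to argue (by continuity and the openness of the domain $D$ of the action) that every $g\in D_G$ admits such a neighbourhood. Once this is handled in the standard way, the second assertion of the theorem is a direct consequence of the $\varphi_x$-relatedness of each right-invariant generator $X_\alpha^R$ with its fundamental vector field $X_\alpha$ (noted just before Theorem~\ref{Th:Integration}), combined with the linear expansion $X^G_t=\sum_\alpha b_\alpha(t)X_\alpha^R$ and $X_t=\sum_\alpha b_\alpha(t)X_\alpha$.
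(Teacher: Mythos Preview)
Your argument is correct. The paper does not actually prove this theorem in the text; it states that ``the proof can be found in \cite{GLMV19}'' and moves on. So there is no in-paper proof to compare against, but your approach---showing $\mathrm{T}_e\varphi_x$ is an isomorphism via the factorisation $\mathrm{ev}_x\circ\widehat\varphi$ and the two defining conditions $\dim V=\dim N$, $\mathcal{D}^V=\mathrm{T}N$, then propagating to arbitrary $g$ via the action identity $\varphi_x\circ L_g=\overline\varphi_g\circ\varphi_x$---is exactly the standard one and is what one would expect to find in the cited reference. One minor wording issue: $\widehat\varphi$ is a Lie algebra \emph{anti}homomorphism rather than an isomorphism in the Lie-algebra sense, but since your argument only uses that it is a \emph{linear} bijection onto~$V$, this does not affect anything.
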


Recall that the action $\varphi$ can be ensured to be globally defined if, and only if, $G$ is simply connected and the Lie algebra $V$ consists of complete vector fields \cite{Palais}. From now on, we assume that the action $\varphi$ is defined globally. 

The mapping $\varphi$ allows us not only to find local diffeomorphisms $\varphi_x:G\rightarrow N$, with $x\in N$, but also maps certain geometric structures related to the locally automorphic Lie system $(N,X,V)$ with the associated automorphic Lie system $(G,X^R,V^R)$. 

As a consequence of Theorem \ref{Trivial}, we have the following corollaries.

\begin{corollary}
\label{cor:superposition_rule_locally_automorphic_LS}
Consider a locally automorphic Lie system $(N,X,V)$. 
Then, $X$ admits a superposition rule that depends only on one particular solution of~$X$.
\end{corollary}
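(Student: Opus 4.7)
The plan is to reduce the statement to the already known fact that every automorphic Lie system admits a superposition rule depending on a \emph{single} particular solution, and then transport that superposition rule to $N$ through the local diffeomorphism provided by Theorem \ref{Trivial}.

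More concretely, I would fix an auxiliary point $x_0\in N$ and invoke Theorem \ref{Trivial} to obtain a Lie group $G$ with Lie algebra isomorphic to $V$, the associated automorphic Lie system $(G,X^G,V^G)$, and a local diffeomorphism $\varphi_{x_0}\colon G\to N$ intertwining $X^G_t$ and $X_t$ for every $t\in\mathbb{R}$. As recalled at the end of Subsection \ref{subsection:Lie systems}, the right-invariance of $X^G$ implies that if $g_1(t)$ is any particular solution of $X^G$, then every integral curve of $X^G$ is of the form $R_g\,g_1(t)=g_1(t)g$ for a unique $g\in G$; hence the map $\Phi^G\colon(g_1;g)\in G\times G\mapsto g_1g\in G$ is a superposition rule for $(G,X^G,V^G)$ depending on one particular solution.

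Next I would transport $\Phi^G$ to $N$ via $\varphi_{x_0}$: define
\begin{equation*}
\Phi\colon N\times N\longrightarrow N,\qquad
\Phi(y;\lambda):=\varphi_{x_0}\!\bigl(\varphi_{x_0}^{-1}(y)\cdot\varphi_{x_0}^{-1}(\lambda)\bigr),
\end{equation*}
well defined on the open set where $\varphi_{x_0}$ is a diffeomorphism. To verify that $\Phi$ is the desired superposition rule, take any particular solution $x_1(t)$ of $X$ and set $g_1(t):=\varphi_{x_0}^{-1}(x_1(t))$; by Theorem \ref{Trivial} this is a solution of $X^G$. For each $\lambda\in N$, set $g:=\varphi_{x_0}^{-1}(\lambda)$, so that $g_1(t)g$ is again a solution of $X^G$ by right-invariance, and consequently $\Phi(x_1(t);\lambda)=\varphi_{x_0}(g_1(t)g)$ is a solution of $X$. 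As $\lambda$ ranges over a neighbourhood of $x_0$, the element $g$ ranges over a neighbourhood of $e$ in $G$, so the right translates $g_1(t)g$ exhaust all solutions of $X^G$ near $g_1(0)$; pushing forward by $\varphi_{x_0}$ yields every solution of $X$ near $x_1(0)$. Thus $\Phi$ expresses the general solution of $X$ in terms of one particular solution and a parameter $\lambda\in N$, which is exactly the content of Definition \ref{def:superposition_rule} with $m=1$.

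The only real subtlety I anticipate is the local character of $\varphi_{x_0}$: in full generality $\Phi$ is only defined on an open neighbourhood rather than on all of $N\times N$. This is not a genuine obstacle because the theory of Lie systems (and, in particular, the Lie--Scheffers framework already used in the paper) is intrinsically local, and the existence of $\Phi^G$ for $(G,X^G,V^G)$ is the only non-trivial ingredient, which has already been established. No further computation is needed, since the compatibility between $\Phi$ and the flow of $X$ is a direct consequence of the $\varphi_{x_0}$-relatedness stated in Theorem \ref{Trivial}.
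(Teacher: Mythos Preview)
Your proposal is correct and follows precisely the approach the paper intends: the paper states this corollary without proof, merely as an immediate consequence of Theorem~\ref{Trivial}, and your argument---transporting the one-particular-solution superposition rule $\Phi^G(g_1;g)=g_1g$ of the automorphic system $(G,X^G,V^G)$ to $N$ via the local diffeomorphism $\varphi_{x_0}$---is exactly how that consequence is spelled out. Your acknowledgement of the locality caveat is appropriate and matches the paper's own remark (just after Theorem~\ref{Trivial}) that $\varphi_x$ is, in general, only a local diffeomorphism.
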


\begin{corollary}
\label{cor:superposition_rule_locally_automorphic_LS_2}
Let $(N,X,V^X)$ be a locally automorphic Lie system on a (connected) manifold. Then all $t$-independent constants of motion of $X$ are constants.
\end{corollary}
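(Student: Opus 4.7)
The plan is to unpack the definition of a $t$-independent constant of motion and then exploit the spanning hypothesis $\mathcal{D}^{V^X}=\mathrm{T}N$ that defines a locally automorphic Lie system. Recall that a $t$-independent constant of motion of $X$ is a function $f\in C^\infty(N)$ such that $X_tf=0$ for every $t\in\mathbb{R}$. So the natural first step is to observe that $f$ is in fact annihilated by \emph{all} vector fields of $V^X$, not only by the $X_t$.

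To see this, I would introduce the subset $W_f=\{Y\in\mathfrak{X}(N)\mid Yf=0\}$. A direct calculation with the Leibniz rule shows that $W_f$ is a Lie subalgebra of $\mathfrak{X}(N)$: if $Y,Z\in W_f$, then $[Y,Z]f=Y(Zf)-Z(Yf)=0$, and $W_f$ is clearly closed under $\mathbb{R}$-linear combinations. Since by hypothesis $\{X_t\}_{t\in\mathbb{R}}\subset W_f$ and $V^X=\mathrm{Lie}(\{X_t\}_{t\in\mathbb{R}})$ is the smallest such Lie algebra, we conclude $V^X\subset W_f$, i.e.\ $Yf=0$ for every $Y\in V^X$.

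Next, I would invoke the locally automorphic hypothesis. By definition, $\mathcal{D}^{V^X}_x=\mathrm{T}_xN$ at every $x\in N$, so the values $\{Y_x\mid Y\in V^X\}$ span $\mathrm{T}_xN$. The identity $Yf=0$ for all $Y\in V^X$ therefore implies $\d f_x(v)=0$ for every $v\in \mathrm{T}_xN$, so that $\d f\equiv 0$ on $N$.

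Finally, since $N$ is connected, $\d f=0$ forces $f$ to be constant, which is the desired conclusion. There is no real obstacle here: the result is essentially an immediate consequence of the definition of a locally automorphic Lie system together with the fact that constants of motion form a subalgebra closed under Lie brackets. The only subtle point worth a brief remark is why constancy of $f$ along the $X_t$ propagates to constancy along all of $V^X$, which is exactly the Lie-algebraic closure argument just given.
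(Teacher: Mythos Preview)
Your proof is correct. The paper itself does not spell out an argument: it simply lists this corollary (together with the preceding one) as a consequence of Theorem~\ref{Trivial}, the local diffeomorphism between $(N,X,V^X)$ and the automorphic Lie system $(G,X^G,\mathfrak{X}_R(G))$. The implied reasoning is that a $t$-independent constant of motion pulls back, via each $\varphi_x$, to a function on~$G$ annihilated by all right-invariant vector fields; since these span~$\mathrm{T}G$, the function is locally constant on~$G$, hence locally constant on~$N$, hence constant by connectedness.

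Your route is genuinely more elementary: you bypass Theorem~\ref{Trivial} entirely and argue straight from the defining condition $\mathcal{D}^{V^X}=\mathrm{T}N$ of a locally automorphic Lie system, together with the Lie-algebraic closure step showing that $V^X\subset W_f$. This avoids invoking the integration of~$V^X$ to a local Lie group action and the attendant locality issues of~$\varphi_x$. The paper's framing has the advantage of situating the result within the automorphic picture it is developing, but your argument is shorter and self-contained. The one place worth emphasising (and you do) is the passage from $X_tf=0$ for all~$t$ to $Yf=0$ for all $Y\in V^X$; this is exactly where the choice $V=V^X$ (the \emph{smallest} Lie algebra) in the statement matters.
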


The existence of the local diffeomorphism $\varphi_x$ 
allows us to obtain theoretical properties 
of locally automorphic Lie systems,
such as Corollary \ref{cor:superposition_rule_locally_automorphic_LS}.
However,
its use to map them into automorphic Lie systems is quite limited
due to the locality of $\varphi_x$ and the difficulties in obtaining an explicit expression, which must be obtained by solving a system of nonlinear ordinary differential equations determined by the Lie algebra $V$.

This can be seen in the next two examples
of locally automorphic Lie systems. 

\begin{example}
\textbf{(The generalised Darboux--Brioschi--Halphen (DBH) system \cite{Darboux})} Consider the system of ordinary differential equations given by
\begin{equation}
\label{Eq:Partial}
\begin{gathered}
\left\{
\begin{aligned}
\frac{{\rm d}w_1}{{\rm d}t}& = w_3w_2-w_1w_3-w_1w_2+\tau^2,\\
\frac{{\rm d}w_2}{{\rm d}t}& = w_1w_3-w_2w_1-w_2w_3+\tau^2,\\
\frac{{\rm d}w_3}{{\rm d}t}& = w_2w_1-w_3w_2-w_3w_1+\tau^2,\\
\end{aligned}\right.\\
\end{gathered}
\end{equation}
where
$$\tau^2=
\alpha_1^2(w_1-w_2)(w_3-w_1)+
\alpha_2^2(w_2-w_3)(w_1-w_2)+
\alpha_3^2(w_3-w_1)(w_2-w_3)\,,$$
and $\alpha_1,\alpha_2,\alpha_3\in \mathbb{R}$. The DBH system with $\tau=0$ appears in the description of triply orthogonal surfaces and the vacuum Einstein equations for hyper-K\"ahler Bianchi-IX metrics \cite{CH03,Darboux,Halphen}. 
Furthermore, the generalised DBH system for $\tau\neq 0$ is a
reduction of the self-dual Yang--Mills equations corresponding to an infinite-dimensional gauge group of diffeomorphisms of a
three-dimensional sphere \cite{CH03}. 

Even though the DBH system is autonomous, 
it is useful, 
e.g.\ to obtain its Lie symmetries \cite{EHLS-2014}, 
to view it as a Lie system related to a Vessiot--Guldberg Lie algebra $V^{\rm DBH} $ ($V$ to simplify the notation) spanned by the vector fields
\begin{align*}
X_1  &=
\frac{\partial}{\partial w_1} + \frac{\partial}{\partial w_2} + \frac{\partial}{\partial w_3}\,,\\
X_2  &=
w_1\frac{\partial}{\partial w_1} + w_2\frac{\partial}{\partial w_2} + w_3\frac{\partial}{\partial w_3}\,,
\\
X_3  &=
-(w_3w_2-w_1(w_3+w_2)+\tau^2)\frac{\partial}{\partial w_1} - (w_1w_3-w_2(w_1+w_3)+\tau^2)\frac{\partial}{\partial w_2}\\
&\quad - (w_2w_1-w_3(w_2+w_1)+\tau^2)\frac{\partial}{\partial w_3}\,.
\end{align*}
One can check that the commutation relations are, 
$$
[X_1 ,X_2 ] = X _1,
\qquad 
[X_1 ,X_3 ] = 2X_2 ,
\qquad 
[X_2 ,X_3 ] = X_3 .
$$
Thus, we have $V \simeq \mathfrak{sl}_2$, 
$\dim V =\dim \mathcal{O}$ and 
$X_1 \wedge X_2 \wedge X_3 \neq 0$ 
on an open submanifold $\mathcal{O}$ of $\mathbb{R}^3$. 
Hence, $(\mathcal{O},X,V )$ is a locally automorphic Lie system. 
To obtain a local diffeomorphism mapping the DBH system into an automorphic one, 
one needs to integrate the vector fields of $V $. In view of their analytic form, it is clear that it is very difficult to provide such a local diffeomorphism. \demo
\end{example}

\begin{example}
\label{exampleCS}
(\textbf{A control system}
\cite{Ni00,Ra06}). 
Consider the following system of ordinary differential equations on~$\mathbb{R}^5$:
\begin{equation}
\label{Eq:ControlSys}
\begin{gathered}
\frac{{\rm d}x_1}{{\rm d}t}=b_1(t)\,,\quad 
\frac{{\rm d}x_2}{{\rm d}t}=b_2(t)\,,\quad
\frac{{\rm d}x_3}{{\rm d}t}=b_2(t)x_1\,,\quad 
\frac{{\rm d}x_4}{{\rm d}t}=b_2(t)x_1^2\,, \quad 
\frac{{\rm d}x_5}{{\rm d}t}=2b_2(t)x_1x_2\,,\
\end{gathered}
\end{equation}
where $b_1(t)$ and $b_2(t)$ are arbitrary $t$-dependent functions. 

This control system is defined by the $t$-dependent vector field 
$X^{\rm CS}=b_1(t)X_1+b_2(t)X_2$ on~$\mathbb{R}^5$,
where the vector fields
\begin{equation}
\label{Eq:BasisControl}
\begin{gathered}
X_1 = \frac{\partial}{\partial x_1}\,,
\qquad 
X_2 = \frac{\partial}{\partial x_2}+x_1\frac{\partial}{\partial x_3}+x_1^2\frac{\partial}{\partial x_4}+2x_1x_2\frac{\partial}{\partial x_5}\,,
\\[1ex]
X_3 = \frac{\partial}{\partial x_3}+2x_1\frac{\partial}{\partial x_4}+2x_2\frac{\partial}{\partial x_5}\,,
\qquad
X_4 = \frac{\partial}{\partial x_4}\,,
\qquad 
X_5 = \frac{\partial}{\partial x_5}\,,
\end{gathered}
\end{equation}
are such that their only non-vanishing commutation relations are
\begin{equation}
\label{Eq:ConRel}
[X_1,X_2]=X_3\,,\qquad [X_1,X_3]=2X_4\,,\qquad [X_2,X_3]=2X_5\,.
\end{equation}

It is remarkable that the initial system of differential equations (\ref{Eq:ControlSys}) can be described via a $t$-dependent vector field that can be written as a linear combination of the vector fields $X_1,X_2$ with $t$-dependent coefficients. Nevertheless,  to prove that $X$ is a Lie system, one has to add to $X_1,X_2$ as many vector fields as necessary to ensure that all such vector fields close a Vessiot--Guldberg Lie algebra. In fact,  the vector fields $X_3,X_4,X_5$ appear as successive Lie brackets of the vector fields $X_1,X_2$ and give rise, all of them, to a basis of a Vessiot--Guldberg Lie algebra.

Thus, $X_1,\ldots,X_5$ span a 5-dimensional nilpotent Lie algebra $V^{\rm CS}$. 
Since $X^{\rm CS}$ takes values in $V^{\rm CS}$, 
 we have that $(\mathbb{R}^5,X^{\rm CS},V^{\rm CS})$ is a Lie system \cite{Ra06}. 
The vector fields of $V^{\rm CS}$ span a distribution 
$\mathcal{D}^{V^{\rm CS}}={\rm T}\mathbb{R}^5$ 
and $\dim V^{\rm CS}=\dim \mathbb{R}^5$. 
Then, 
$(\mathbb{R}^5,X^{\rm CS},V^{\rm CS})$ 
becomes a locally automorphic Lie system. \demo
\end{example}

\subsection{Invariants for locally automorphic Lie systems}\label{Sec:Inv}

A {\it Lie symmetry of a Lie system $(N,X,V)$}
is a vector field $Y\in\mathfrak{X}(N)$ such that
$\mathcal{L}_{Y} Z =0$
for every vector field $Z \in V$.
If $V$ is spanned by the vector fields $X_1,\ldots,X_r$, this condition is equivalent to saying that $Y$ has to fulfil the following system of partial differential equations:
\begin{equation}\label{Eq:SysRos}
    \mathcal{L}_{X_i}Y = 0,\qquad i=1,\ldots,r.
\end{equation}

Note that the space of Lie symmetries of $(N,X,V)$ just depends on the Vessiot--Guldberg Lie algebra $V$, which motivates denoting it by ${\rm Sym}(V)$.  Moreover, ${\rm Sym}(V)$ is a Lie algebra.  In what follows, we study ${\rm Sym}(V)$ for the case of locally automorphic Lie systems  $(N,X,V)$. 

Consider a locally automorphic Lie system $(N,X,V)$. Then, each mapping $\varphi_x$ maps it to an automorphic Lie system $(G,X^R,V^R)$. It is clear that ${\rm Sym}(V^R)=V^L$. Since $\varphi_x$ is a local diffeomorphism mapping $V$ onto $V^R$, then it also maps ${\rm Sym}(V)$ onto $V^L$. Thus, we have the following lemma, whose implications are shown in Example \ref{Sym}.

\begin{lemma}\label{lemma:Exis}
   The Lie algebra of Lie symmetries of~$(N,X,V)$, namely ${\rm Sym}(V)$, is isomorphic to~$V$.
\end{lemma}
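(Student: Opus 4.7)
The plan is to translate the problem to the automorphic Lie system $(G,X^G,V^R)$ related to $(N,X,V)$ by Theorem~\ref{Th:XG-X}, where symmetries have a clean description, and then transport the result back via the map $\varphi_{x_0}$ of Theorem~\ref{Trivial}. For any fixed $x_0\in N$, that theorem provides a local diffeomorphism $\varphi_{x_0}\colon G\to N$, and Theorem~\ref{Th:XG-X} guarantees that each right-invariant vector field $\xi^R\in V^R=\mathfrak{X}_R(G)$ is $\varphi_{x_0}$-related to the fundamental vector field $\widehat\varphi(\xi)\in V$. Hence $\varphi_{x_0}$ locally identifies $V$ with $V^R$.

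On the group side, the description of symmetries is immediate from Remark~\ref{Rem:Important}: a vector field on the connected Lie group $G$ commutes with every right-invariant vector field if and only if it is left-invariant, so ${\rm Sym}(V^R)=V^L=\mathfrak{X}_L(G)$. This $V^L$ has the same dimension as $V$ and, via the identification $\mathfrak{g}\cong V^L$ together with the (anti)isomorphism $\widehat\varphi\colon\mathfrak{g}\to V$ (and the fact that any Lie algebra is isomorphic to its opposite), is Lie-algebra isomorphic to $V$. So it suffices to produce a Lie algebra isomorphism ${\rm Sym}(V)\cong V^L$.

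I would define such an isomorphism $\Phi\colon{\rm Sym}(V)\to V^L$ as follows. Given $Y\in{\rm Sym}(V)$, since $\varphi_{x_0}$ is a local diffeomorphism near $e\in G$, pull $Y$ back to a vector field $\widetilde Y$ on a neighborhood of $e$. Because $Y$ commutes with every element of $V$ and $\varphi_{x_0}$ locally identifies $V$ with $V^R$, $\widetilde Y$ commutes there with every right-invariant vector field; by the local version of Remark~\ref{Rem:Important}, $\widetilde Y$ agrees on that neighborhood with a unique left-invariant vector field, which I take as $\Phi(Y)$. The map $\Phi$ is linear and a Lie algebra homomorphism (up to the sign inherent in $\widehat\varphi$) because pullback by a local diffeomorphism preserves brackets.

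Injectivity is the easy direction: if $\Phi(Y)=0$ then $Y$ vanishes on a neighborhood of $x_0$; since the flow of a Lie symmetry $Y$ commutes with those of elements of $V$, the zero set of $Y$ is invariant under such flows, and by the transitivity of $V$ on the connected manifold $N$ (guaranteed by the locally automorphic hypothesis), this zero set is both open and closed, hence all of $N$. The main obstacle I expect is surjectivity: given $\xi^L\in V^L$, one pushes it forward by $\varphi_{x_0}$ to a local Lie symmetry of $V$ near $x_0$ and then propagates it to all of $N$ along the flows of $V$. The propagation is natural because $\xi^L$ commutes with $V^R$ on $G$, but one must still verify its consistency when two compositions of flows of $V$ join $x_0$ to the same point---this is the delicate step since $\varphi_{x_0}$ is only a local diffeomorphism. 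Once that consistency is established, the resulting map inverts $\Phi$, and chaining the isomorphisms ${\rm Sym}(V)\cong V^L\cong V$ completes the proof.
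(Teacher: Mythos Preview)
Your approach is essentially the same as the paper's: use the local diffeomorphism $\varphi_{x_0}$ to identify $V$ with $V^R$ on $G$, invoke ${\rm Sym}(V^R)=V^L$ (Remark~\ref{Rem:Important}), and transport the isomorphism back. The paper's argument is a two-line sketch that simply asserts ``since $\varphi_x$ is a local diffeomorphism mapping $V$ onto $V^R$, it also maps ${\rm Sym}(V)$ onto $V^L$''; you are being more careful by spelling out the injectivity via the transitivity of $V$ and by flagging the consistency issue in propagating a local symmetry to all of $N$ when $\varphi_{x_0}$ is not a global diffeomorphism. That surjectivity step is indeed the only non-automatic point, and the paper does not address it explicitly either---so your proposal is at least as complete as the original, and the monodromy concern you raise is a genuine detail that both treatments ultimately leave to the reader.
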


\begin{example}\label{Sym}
    Consider again Example \ref{exampleCS}, where we studied the control system given by (\ref{Eq:ControlSys}).

    Solving the linear system of partial differential equations (\ref{Eq:SysRos}) in the coefficients of a vector field $Y$ in the basis $\partial/\partial x_1,\dotsc,\partial/\partial x_5$, which demands a very long and tedious calculation, one sees that every Lie symmetry $Y$ of an arbitrary control system of the form (\ref{Eq:ControlSys}) has to be a linear combination with constant coefficients of the vector fields
\begin{equation}
\label{Eq:SymCS}
\begin{gathered}
Y_1 = \frac{\partial}{\partial x_1} + x_2\frac{\partial }{\partial x_3} + 2x_3\frac{\partial}{\partial x_4} + x_2^2\frac{\partial}{\partial x_5}\,, \qquad
Y_2 = \frac{\partial}{\partial x_2} + 2x_3\frac{\partial }{\partial x_5}\,,\\
Y_3 = \frac{\partial}{\partial x_3}\,, 
\qquad
Y_4 = \frac{\partial}{\partial x_4}\,,
\qquad
Y_5 = \frac{\partial}{\partial x_5}\,.
\end{gathered}
\end{equation}
It is easy to see that the vector fields $-Y_i$, where $i=1,\ldots,5$, 
span a Lie algebra with the same structure constants as $V^{\rm CS} = \langle X_1,\ldots,X_5\rangle$. \demo
\end{example}

From the fact that every locally automorphic Lie system $(N,X,V)$ is locally diffeomorphic to an automorphic Lie system $(G,X^R,V^R)$, 
we find that every differential form on $N$ that is invariant with respect to the Lie derivative of elements of~$V$ 
must be locally diffeomorphic to a left-invariant differential form on~$G$. 
Since ${\rm Sym}(V)$ is also diffeomorphic to $V^L$, 
and taking into account Remark \ref{Rem:Important},
one gets the following theorem: 
\begin{theorem}
\label{th:invariant_form} 
Consider a locally automorphic Lie system $(N,X,V)$. Let $Y_1,\ldots, Y_r$ be a basis of ${\rm Sym}(V)$, with a dual frame $\nu^1,\ldots, \nu^r$. 
Then, a differential form on~$N$ is invariant with respect to the 
Lie algebra $V$
if, and only if,
it is a linear combination with real coefficients of the exterior products of $\nu^1,\dotsc,\nu^r$.
\end{theorem}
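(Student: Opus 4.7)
The plan is to exploit the fact that $Y_1,\dotsc,Y_r$ forms a global frame on $N$ and that it commutes pointwise with every element of $V$, and then run an invariance computation in this frame.

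First I would verify that $\{Y_1,\dotsc,Y_r\}$ really is a frame of $TN$. By Lemma \ref{lemma:Exis} we have $\dim\mathrm{Sym}(V)=\dim V=\dim N$, and under the local diffeomorphism $\varphi_x$ of Theorem \ref{Trivial} the Lie algebra $\mathrm{Sym}(V)$ corresponds to the Lie algebra $\mathfrak{X}_L(G)$ of left-invariant vector fields, which spans $TG$ at every point. Hence the $Y_j$ are pointwise linearly independent, so $\nu^1,\dotsc,\nu^r$ is indeed a dual coframe and every differential form on $N$ admits a unique expansion $\omega=\sum_I f_I\,\nu^I$ in exterior products $\nu^I=\nu^{i_1}\wedge\dotsb\wedge\nu^{i_k}$ with smooth coefficients $f_I\in C^\infty(N)$.

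Next I would show that each $\nu^i$ is itself $V$-invariant. For $Z\in V$ and any basis vector $Y_j$ of $\mathrm{Sym}(V)$, the formula for the Lie derivative of a one-form gives
\[
(\mathcal{L}_Z\nu^i)(Y_j)=Z(\nu^i(Y_j))-\nu^i([Z,Y_j]).
\]
The first term vanishes because $\nu^i(Y_j)=\delta^i_j$ is constant, and the second vanishes because $Y_j\in\mathrm{Sym}(V)$ forces $[Z,Y_j]=-\mathcal{L}_{Y_j}Z=0$. Since $\{Y_j\}$ is a frame, $\mathcal{L}_Z\nu^i=0$. The derivation property of $\mathcal{L}_Z$ on the exterior algebra then yields $\mathcal{L}_Z\nu^I=0$ for every multi-index $I$, which immediately gives the ``if'' direction: any constant-coefficient combination $\omega=\sum_I c_I\,\nu^I$ with $c_I\in\mathbb{R}$ satisfies $\mathcal{L}_Z\omega=0$ for all $Z\in V$.

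For the converse, suppose $\omega=\sum_I f_I\,\nu^I$ is $V$-invariant. Applying $\mathcal{L}_Z$ term by term and using $\mathcal{L}_Z\nu^I=0$, we obtain $\sum_I Z(f_I)\,\nu^I=0$; pointwise linear independence of the $\nu^I$ then forces $Z(f_I)=0$ for every $Z\in V$ and every $I$. Because $(N,X,V)$ is locally automorphic we have $\mathcal{D}^V=TN$, so each $f_I$ is annihilated by every tangent vector at every point, i.e.\ $\d f_I=0$; as $N$ is connected, $f_I$ is a real constant (equivalently, this is the content of Corollary \ref{cor:superposition_rule_locally_automorphic_LS_2} applied to $t$-independent functions). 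The main conceptual step, and the one I would be most careful about, is establishing that $\mathrm{Sym}(V)$ provides a genuine global frame and not merely a pointwise-spanning set, since everything else is a bookkeeping exercise with the Cartan calculus once that is in hand.
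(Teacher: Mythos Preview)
Your proof is correct. It takes a somewhat different route from the paper's argument, and the difference is worth noting.

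The paper transfers the entire problem to the Lie group $G$ via the local diffeomorphism $\varphi_x$ of Theorem~\ref{Trivial}: since $V$ corresponds to $V^R=\mathfrak{X}_R(G)$, the $V$-invariant forms on $N$ correspond to the left-invariant forms on $G$, and these are exactly the real-coefficient combinations of exterior products of a left-invariant coframe. Because $\mathrm{Sym}(V)$ corresponds to $V^L$ under $\varphi_x$, the dual coframe $\nu^1,\dotsc,\nu^r$ on $N$ matches the left-invariant coframe on $G$, and the result follows by transport.

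You, by contrast, use the local diffeomorphism only once---to establish that $Y_1,\dotsc,Y_r$ is a genuine frame of $TN$---and then carry out the rest by a direct Cartan-calculus computation on $N$: the identity $[Z,Y_j]=0$ and the duality $\nu^i(Y_j)=\delta^i_j$ give $\mathcal{L}_Z\nu^i=0$, after which expanding in the $\nu^I$ and invoking $\mathcal{D}^V=TN$ (or Corollary~\ref{cor:superposition_rule_locally_automorphic_LS_2}) forces the coefficients to be constant. Your approach is more elementary and works globally on $N$ once the frame property is in hand, whereas the paper's approach carries more geometric content by identifying the $V$-invariant forms with the left-invariant forms on a Lie group, at the cost of having to manage the transfer through a merely local diffeomorphism. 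Your closing caveat about the frame property is well placed: that is indeed the only point where the structure theory of locally automorphic systems is genuinely needed.
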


\begin{remark} Let us consider a result on the invariants of general multisymplectic Lie systems. 
Let $(N,\Theta,X)$ be a multisymplectic Lie system $X$ with a Vessiot--Guldberg Lie algebra $V^X$.
A Lie algebra of vector fields $W$ on $N$ is a Lie algebra of symmetries of the Lie system $X$ if, and only if, $[V^X,W]=0$.
Let $Y_1,\ldots,Y_n$ be a basis of $W$. Then,
$$
\iota_{Y_1\wedge\ldots\wedge Y_n} \Theta
$$
are constants of the motion of our system.
\end{remark}

\section{Reduction of multisymplectic Lie systems}
\label{section:reduction}

This section introduces a reduction procedure for multisymplectic Lie systems by infinitesimal symmetries of the associated Lie system and its compatible multisymplectic form.
On the one hand, this simplifies the study of certain multisymplectic Lie systems by transforming them into new ones on manifolds of smaller dimension. On the other hand, the multisymplectic form of the reduced multisymplectic Lie system enables one to use the methods described in \cite{GLMV19} to study its properties. To introduce our multisymplectic Lie system reduction theory, we illustrate a reduction procedure based upon some examples.

\subsection{Introductory examples}

The following examples show how multisymplectic Lie systems can be reduced to new multisymplectic Lie systems on manifolds of smaller dimension. 

\subsubsection{Schwarz equations}\label{subsubsection:SE}

Consider a Schwarz equation of the form \cite{Be07,OT09}
\begin{equation}\label{Eq:KS3}
\frac{\d^3x}{\d t^3}=\frac 32\left(
\frac{\d x}{\d t}\right)^{-1}\!\!\left(\frac{\d^2x}{\d t^2}\right)^{2}\!\!+2b_1(t)\frac{\d x}{\d t},
\end{equation}
where $b_1(t)$ is a non-constant function. The relevance of (\ref{Eq:KS3}) is due to its appearance in the study of Ermakov systems \cite{LA08} and the Schwarz derivative (see \cite{CGLS14} and references therein).

The differential equation (\ref{Eq:KS3}) is known to be a higher-order Lie system \cite{CGL11}. 
This means that the associated system of first-order differential equations obtained by adding the variables
$v:= \d x/\d t$ and $a:= \d^2x/\d t^2$, i.e.
\begin{equation}\label{Eq:firstKS3}
\frac{\d x}{\d t}=v\,,\qquad \frac{\d v}{\d t}=a\,,\qquad \frac{\d a}{\d t}=\frac 32 \frac{a^2}v+2b_1(t)v\,,
\end{equation}  
is a Lie system.
Indeed, (\ref{Eq:firstKS3}) is associated with the $t$-dependent vector field on $\mathcal{O}:=\{(x,v,a)\in{\rm T}^2\mathbb{R}\mid v\neq 0\}$ of the form
$
{X^{\rm S}}=X_3+b_1(t)X_1,
$ where the vector fields
\begin{equation}\label{Eq:VFKS1}
\begin{array}{c}
X_1=2v\dfrac{\partial}{\partial a}\,,\qquad X_2=v\dfrac{\partial}{\partial v}+2a\dfrac{\partial}{\partial a}\,,\qquad X_3=v\dfrac{\partial}{\partial x}+a\dfrac{\partial}{\partial v}+\dfrac 32
\dfrac{a^2}v\dfrac{\partial}{\partial a}\,,\end{array}
\end{equation}
satisfy
the commutation relations
\begin{equation}\label{Eq:KSbracket}
[X_1,X_2]=X_1\,,\quad [X_1,X_3]=2X_2\,, \quad [X_2,X_3]=X_3\,.
\end{equation}
Consequently, $X_1, X_2, X_3$ span a three-dimensional Lie algebra of vector fields $V^{\rm S}$ isomorphic
to $\mathfrak{sl}_2$ and $X^{\rm S}$ becomes a $t$-dependent vector field taking values in $V^{\rm S}$, i.e. $X^{\rm S}$ is a Lie system.

In \cite{GLMV19} it was proved that the vector fields of $V^{\rm S}$ are Hamiltonian relative to the multisymplectic structure $\Theta_{\rm S}$ on $\mathcal{O}$ given by
$\Theta_{\rm S}= \frac{1}{2v^3} \, \d a \wedge \d v \wedge \d x$  satisfying that 
\begin{equation}
\label{Eq:con}
\mathcal{L}_{X_\alpha}\Theta_{\rm S}=0\,,\qquad \alpha=1,2,3\,.
\end{equation}
Therefore,   $(\mathcal{O},\Theta_{\rm S},X^{\rm S})$ is a multisymplectic Lie system. Moreover,
\begin{equation}\label{Eq:HamSc}
\begin{gathered}
\iota_{X_1}\Theta_{\rm S}= \frac 1{v^2}\d v\wedge \d x =
\d\left(-\frac1v\d x\right)=\d\theta_1\,,
\\
\iota_{X_2}\Theta_{\rm S} = \frac 1{v}\left(\frac a{v^2}\d v-\frac 1{2v} \d a\right)\wedge \d x =
\d\left(-\frac{a}{2v^2}\d x+\frac{\d v}{2v}\right) =\d\theta_2\,,
\\
\iota_{X_3}\Theta_{\rm S} =
-\frac{3a^2}{4v^4}\d x\wedge \d v-\frac{a}{2v^3}\d a\wedge \d x+\frac 1{2v^2}\d a\wedge \d v =
\d\left( -\frac {a^2}{4v^3}\d x+ \frac{a}{v^2} \d v - \frac 1{2v}\d a \right)=
\d\theta_3\,.
\end{gathered}
\end{equation}
Hence, $X_1,X_2,X_3$ are Hamiltonian vector fields with respect to the multisymplectic manifold
$(\mathcal{O},\Theta_{\rm S})$,  with Hamiltonian one-forms
$$
\theta_1=-\frac{\d x}{v}\,,\qquad 
\theta_2=-\frac{a}{2v^2}\d x+\frac{\d v}{2v}\,,\qquad 
\theta_3=-\frac {a^2}{4v^3}\d x+ \frac{a}{v^2} \d v - \frac 1{2v}\d a\,.
$$

Consequently, no matter what the $t$-dependent coefficient $b_1(t)$ in (\ref{Eq:firstKS3}) is, the evolution of $X{\rm S}$ preserves the volume form $\Theta_{\rm S}$.
Since $\mathcal{D}^{V^{\rm S}}=T\mathcal{O}$, and in view of (\ref{Eq:con}), the value of $\Theta_{\rm S}$ at a point $o\in\mathcal{O}$ determines the value of $\Theta_{\rm S}$ on the connected component of $o$ in~$\mathcal{O}$.
Moreover, $\Theta_{\rm S}$ is,
up to a multiplicative constant on each connected component of $\mathcal{O}$,
the only volume form satisfying equations~(\ref{Eq:con}).
Since any one-form or two-form on a three-dimensional manifold is 1--degenerate, the system under study has a unique,
up to a non-zero proportional constant, multisymplectic form which is invariant under the action of $V^{\rm S}$.

The Schwarz equation, 
whose first-order system $X^{\rm S}$ is given by (\ref{Eq:firstKS3}), 
admits a Lie algebra of Lie symmetries, ${\rm Sym}(V^{\rm S})$, spanned by 
(see \cite{LG99,OT05})
\begin{equation}\label{Eq:SymSc2}
Y_1= \frac{\partial}{\partial x}
\,,\qquad
Y_2= x\frac{\partial}{\partial x} + v\frac{\partial}{\partial v} + a\frac{\partial}{\partial a}
\,,\qquad
Y_3= x^2\frac{\partial}{\partial x} + 2vx\frac{\partial}{\partial v} + 2(ax+v^2)\frac{\partial}{\partial a} \,.
\end{equation} 

Since $X^{\rm S}$ is a locally automorphic Lie system, ${\rm Sym}(V^{\rm S})$ is the Lie algebra of Lie symmetries given by Lemma \ref{lemma:Exis} and $[X_\alpha,Y_\beta]=0$, for $\alpha,\beta=1,2,3$. Therefore, $W:= \langle X_1, X_2, X_3, Y_1, Y_2, Y_3 \rangle$ is a Lie algebra isomorphic to $\mathfrak{sl}_2\oplus\mathfrak{sl}_2$.

A short computation shows that
$$
\iota_{Y_1}\Theta_S = \d \left( \frac1{4v^2}\d a \right) ,\qquad
\iota_{Y_2}\Theta_S = \d \left( -\frac{a}{2v} \d\left(\frac xv\right) \right) ,\qquad
\iota_{Y_3}\Theta_S =
\d \left( -\frac{x}{v}\d v - \frac{a}{2v} \d\left(\frac{x^2}{v}\right) \right) .
$$
Since the Vessiot--Guldberg Lie algebra for $X^{\rm S}$, namely (\ref{Eq:VFKS1}), consists also of Hamiltonian vector fields relative to $\Theta_{\rm S}$, the Lie algebra $W$ is made of Hamiltonian vector fields relative to $\Theta_{\rm S}$ and $\mathcal{L}_Z\Theta_{\rm S}=0$ for every $Z\in W$.

We denote by $G_2$ the one-parameter group of diffeomorphisms generated by the flow of the Lie symmetry $Y_2$ and the inversion $(x,v,a)\mapsto (-x-v,-a)$ on~$\mathcal{O}$. 
Define a projection $\pi:(x,v,a)\in \mathcal{O}\mapsto (\bar x = x/v,\bar a = a/v)\in \mathcal{O}/G_2$, where $\mathcal{O}/G_2$ is the space of orbits of the action of $G_2$ on~$\mathcal{O}$, which admits an immediate manifold structure as it is diffeomorphic to $\mathbb{R}^2$. Since $Y_2$ is a Lie symmetry of $X^{\rm S}$, the system $X^{\rm S}$ can be reduced onto $\mathcal{O}/G_2$. In coordinates, the projections of $X_1,X_2,X_3$ onto $\mathcal{O}/G_2$ read
$$
\bar{X_1} := 2\frac{\partial}{\partial \bar a}
\,,\quad
\bar{X_2} := -\bar x\frac{\partial}{\partial \bar x}+ \bar a\frac{\partial}{\partial \bar a}
\,,\quad
\bar{X_3} := (1-\bar x\bar a)\frac{\partial}{\partial \bar x}+\frac{\bar a^2}{2}\frac{\partial}{\partial \bar a} \,,
$$
respectively. They form a basis of the projection, $V^{\rm S}_R$, of the Lie algebra $V^{\rm S}=\langle X_1,X_2,X_3\rangle$ onto $\mathcal{O}/G_2$. In fact,
\begin{equation}\label{Eq:NewBasis}
[\bar X_1,\bar X_2]=\bar X_1,\qquad [\bar X_1,\bar X_3]=2\bar X_2,\qquad [\bar X_2,\bar X_3]=\bar X_3.
\end{equation}
Since $V^{\rm S}$ is simple and $V^{\rm S}_R\neq \{0\}$, one gets $V^{\rm S}_R\simeq V^{\rm S}\simeq\mathfrak{sl}_2$.

The vector fields $\bar{X_1},\bar{X_2},\bar{X_3}$ are Hamiltonian relative to the symplectic form defined by 
\begin{equation}\label{Eq:ReSym}
\bar{\Theta}:= \frac{1}{2}\d\bar x\wedge \d\bar a.
\end{equation}
In fact,
\[
\iota_{\bar{X_1}}\bar{\Theta} = -\d\bar{x}\,,\qquad \iota_{\bar{X_2}}\bar{\Theta} = -\frac{1}{2}\d(\bar{x}\bar{a})\,,\qquad \iota_{\bar{X_3}}\bar{\Theta} = \frac{1}{2}\d\Big(\bar{a} -\frac{\bar{x}\bar{a}^2}{2}\Big)\,.
\]

Moreover, $\bar{\Theta}$ is the only two-form in $\mathcal{O}/G_2\equiv\mathbb{R}^2$ such that $\pi^* \bar{\Theta}=\iota_{Y_2}\Theta_{\rm S}$. 

Thus, the multisymplectic Lie system 
$(\mathcal{O},\Theta_{\rm S},X^{\rm S})$ 
reduces to a new multisymplectic Lie system 
$(\mathcal{O}/G_2,\bar{\Theta},X^{\rm S}_R)$ 
with a Vessiot--Guldberg Lie algebra $V^{\rm S}_R$, where $X_R^{\rm S}$ is the projection of $X^{\rm S}$ onto $\mathcal{O}/G_2$. 
In particular, this reduced Lie system is a Lie--Hamilton system
(the multisymplectic form is a symplectic form).

To illustrate the simplification obtained by our multisymplectic reduction, let us comment on the particular reduced system $X^{\rm S}_R = \bar X_3 -1/4\bar X_1$, whose integral curves satisfy the system of differential equations
\begin{equation}\label{eq:Schwarz-reduced}
\frac{\d\bar x}{\d t} = 1 - \bar x\bar a\, ,\qquad \frac{\d \bar a}{\d t} = \frac{\bar a^2}{2} - \frac{1}{2}\,.
\end{equation}
Note that this system is a Hamiltonian system relative to the symplectic form (\ref{Eq:ReSym}). As a symplectic Hamiltonian system on the plane,  many of its properties, such as its superposition rules, have been studied elsewhere \cite{BBHLS13,BHLS15}. 
Let us comment on other properties that have not been studied so far, e.g.\ its equilibrium points. 
System (\ref{eq:Schwarz-reduced}) has equilibrium points at two points given by
$$\bar x=\bar a=\pm 1.
$$
It is notable that the system on $\mathcal{O}$ of the form
\begin{equation}\label{Eq:SchwarzParticular}
\frac{\d x}{\d t}=v\,,\qquad \frac{\d v}{\d t}=a\,,\qquad \frac{\d a}{\d t}=\frac 32 \frac{a^2}v-\frac{v}{2}\,,
\end{equation}
that projects onto (\ref{eq:Schwarz-reduced}) has no equilibrium point since $v\neq 0$ on~$\mathcal{O}$.  The points in $\mathcal{O}$ that project onto equilibrium points of (\ref{eq:Schwarz-reduced}) are called {\it relative equilibrium points} \cite{Am78,LZ21}.

Although analytical solutions for (\ref{Eq:SchwarzParticular}) can be obtained using mathematical manipulation software, the solutions obtained in this way are, as far as we were able to analyse them, too complicated to be easily studied. This justifies the execution of our reduction procedure. More precisely, the solutions to (\ref{eq:Schwarz-reduced}) are of the form
$$
\bar x(t)=(2c_2-1)(1+\cosh(t+2c_1))+\sinh(t+2c_1)\,,\qquad \bar a(t)= \frac{1-e^{2 c_1+t}}{e^{2 c_1+t}+1}\,, \qquad c_1,c_2\in \mathbb{R},
$$
for solutions with $|\bar a(t)|<1$ and 
$$
\bar x(t)=-1-2c_2+e^{t+2c_1}c_2+e^{-t-2c_1}(1+c_2)\,,\qquad \bar a(t)= \frac{1+e^{2 c_1+t}}{1-e^{2 c_1+t}},\qquad c_1,c_2\in \mathbb{R},
$$
for solutions for $|\bar a(t)|>1$. Meanwhile, the solutions for $\bar a(t)=\pm1$ read
\begin{equation}\label{Eq:ParSol}
\bar {x}(t)=\pm 1+e^{\mp t}c_2,
\end{equation}
respectively.

\begin{figure}[ht]
    \centering
    \includegraphics[width=0.3\textwidth]{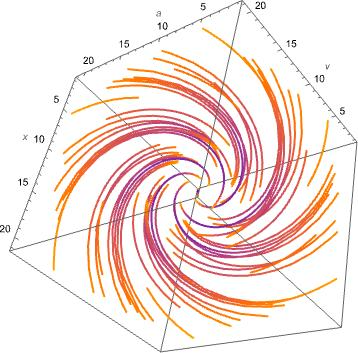}    \includegraphics[width=0.3\textwidth]{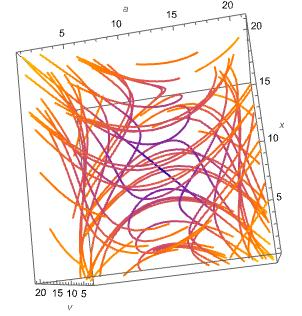}
    \includegraphics[width=0.3\textwidth]{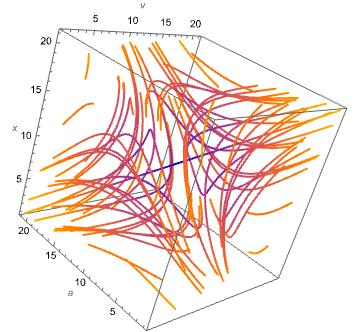}
    \includegraphics[width=0.4\textwidth]{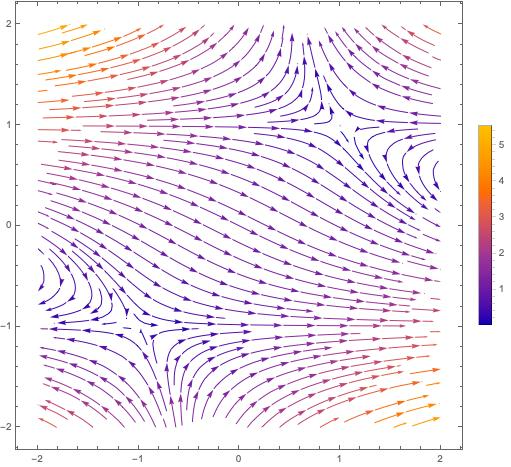}
    \caption{The first three diagrams represent some integral curves of the system (\ref{Eq:SchwarzParticular}) on the three-dimensional manifold $\mathcal{O}$ from different points of view, illustrating the fact that it has no stable points, as already commented. The figure on the second line depicts the integral curves of the reduced multisymplectic system (\ref{eq:Schwarz-reduced}) close to the equilibrium points $(1,1)$ and $(-1,-1)$.}\label{fig:Schwarz}
\end{figure}

It can be established whether two Lie systems in the plane related to a Vessiot--Guldberg Lie algebra isomorphic to $\mathfrak{sl}_2$, like (\ref{eq:Schwarz-reduced}), are locally diffeomorphic via the so-called {\it Casimir tensor fields} (see \cite{BHLS15} for details). In particular, it was proved in \cite[p.\,5]{BHLS15} that the Casimir tensor field for a Vessiot--Guldberg Lie algebra on the plane isomorphic to $\mathfrak{sl}_2$ in a basis with structure constants (\ref{Eq:NewBasis}) is given by
$$
G=X_1\otimes X_3+X_3\otimes X_1-2X_2\otimes X_2=-2\bar x^2\frac{\partial}{\partial \bar x}\otimes \frac{\partial}{\partial \bar x} + 2\frac{\partial}{\partial \bar x}\otimes \frac{\partial}{\partial \bar a}+2\frac{\partial}{\partial \bar a}\otimes \frac{\partial}{\partial \bar x}.
$$

It was also shown in \cite[Lemma 4.1]{BHLS15} that the determinant of the coefficients of $G$, namely $\det G=-4$, allows us to determine that there exists a local diffeomorphism on $\mathbb{R}^2$ mapping (\ref{eq:Schwarz-reduced}) onto a Lie system on the plane related to a Vessiot--Guldberg Lie algebra 
$$
\left\langle \frac{\partial}{\partial x}+\frac{\partial}{\partial y},x\frac{\partial}{\partial x}+y\frac{\partial}{\partial y},x^2\frac{\partial}{\partial x}+y^2\frac{\partial}{\partial y}\right\rangle,
$$
the so-called $\mathrm{I}_4$. It was proved in \cite[Table 1]{LL18}, that this Lie algebra is a Lie algebra of Killing vector fields relative to a pseudo-Riemannian metric on~$\mathbb{R}^2$. 
Physically, it is also notable that (\ref{eq:Schwarz-reduced}) is locally diffeomorphic to a Milne--Pinney equation \cite[Lemma 4.1 and Proposition 6.1]{BHLS15}, which appears in many physical problems (see \cite{CL09,Dissertationes,LS20} and references therein).

It is possible to see in Figure \ref{fig:Schwarz} that equilibrium points of (\ref{eq:Schwarz-reduced}) are unstable (solutions close to them move away from them as time passes by). 

There is a last remark about (\ref{eq:Schwarz-reduced}) to be considered. 
Since it is a Hamiltonian system, its evolution conserves the area of any set of solutions. 
Despite this, system (\ref{eq:Schwarz-reduced}) is still unstable around its equilibrium points.

\subsubsection{Dissipative quantum harmonic oscillators}\label{subsubsection:DQHO}

Let us define $G^{do}:=SL_2\ltimes H_3$, where $SL_2$ is the Lie group of unimodular real $2\times 2$ matrices with real entries, $H_3$ is the Heisenberg group of upper real triangular unimodular $3\times 3$ matrices with ones in the diagonal, and $\ltimes$ represents the semidirect product of $SL_2$ acting on the normal subgroup $H_3$ of~$G^{do}$ via a Lie group morphism $\Phi:A\in SL_2  \mapsto \Phi_A\in{\rm Aut}(H_3)$, where ${\rm Aut}(H_3)$ is the space of Lie group automorphisms on $H_3$. Note that the Lie group product in $G^{do}$ can be written as
$$
(A,B)(A',B')=(AA',B\Phi_A(B')),\qquad A,A'\in SL_2,\qquad B,B'\in H_3.
$$
Note that the elements of $SL_2$ in $SL_2\ltimes H_3$ can be then written as $(A,I_3)$, where $A\in SL_2$ and $I_3$ is the neutral element of $H_3$. Meanwhile, the elements of $H_3$ in $SL_2\ltimes H_3$ can be brought into the form $(I_2,B)$ for some $B\in H_3$ and the neutral element $I_2$ in $SL_2$. Then,
$$
(A,I_3)(I_2,B)(A^{-1},I_3)=(A,\Phi_A(B))(A^{-1},I_3)=(I_2,\Phi_A(B)),\qquad A\in SL_2,\qquad B\in H_3.
$$
Hence, $\Phi_A(B)$ can be considered as the inner automorphism induced by $A$ acting on $B$ within $SL_2\ltimes H_3$. 
It is standard to consider $G^{do}$ as a matrix Lie group, at least locally around its neutral element. In this case, the elements of $SL_2$ and $H_3$ can be considered as subgroups of matrices in a certain $GL(n,\mathbb{R})$ and one can write $\Phi_A(B)=ABA^{-1}$ for every $A\in SL_2$ and $B\in H_3$.
The Lie algebra $\mathfrak{g}^{do}$ of~$G^{do}$ satisfies that $\mathfrak{g}^{do}\simeq\mathfrak{sl}_2\oplus_S\mathfrak{h}_3$,
where $\oplus_S$ represents a semi-direct sum of the Lie algebra
$\mathfrak{sl}_2\simeq T_eSL_2$
with the ideal
$\mathfrak{h}_3\simeq T_eH_3$ of $T_eG^{do}$.

Let us consider the automorphic Lie system on $G^{do}$ of the form
\begin{equation}\label{Eq:Partial1}
\frac{{\rm d}g}{{\rm d}t}=\sum_{\alpha=1}^6b_\alpha(t)X^R_\alpha(g)\,,\qquad g\in G^{do}\,,\qquad t\in \mathbb{R}\,,
\end{equation}
where $b_1(t),\ldots,b_6(t)$ are arbitrary $t$-dependent functions, and $\{X^R_1,\ldots,X^R_6\}$ form a basis of the Lie algebra $V^{do}$ of right-invariant vector fields on~$G^{do}$
so that $\{X^R_1,X_2^R,X_3^R\}$ is a basis of the Lie algebra $V_{sl}:=\langle X_1^R,X_2^R,X_3^R\rangle\simeq\mathfrak{sl}_2$
and
$\{X^R_4,X^R_5,X^R_6\}$ stands for a basis of the ideal
$V_h:=\langle X_4^R,X_5^R,X_6^R\rangle\simeq \mathfrak{h}_3$ of~$V^{do}$. Geometrically,  system (\ref{Eq:Partial1}) amounts to the $t$-dependent vector field on $G^{do}$ given by
\begin{equation}\label{Eq:DQHO}
X^{do}(t,g):=\sum_{\alpha=1}^6b_\alpha(t)X_\alpha^R(g)\,,\qquad g\in G^{do},\qquad t\in \mathbb{R}\,.
\end{equation}
Since the right-hand side of (\ref{Eq:DQHO}) is a $t$-dependent vector field $X^{do}$ on $G^{do}$ taking values in the finite-dimensional Lie algebra~$V^{do}$,
the system $X^{do}$ is a Lie system.

The relevance of system~(\ref{Eq:Partial1}) is due to the fact that it appears in the study of quantum harmonic oscillators with dissipation (see \cite[Eq. (3.9)]{Dissertationes} for details). More specifically, they appear in methods to solve  $t$-dependent Hamilton operators of the form
\begin{equation}\label{Eq:HamDis}
\widehat{H}(t)=\alpha(t)\frac{\widehat{x}^2}{2}+\beta(t)\frac{\widehat{p}^2}2+\gamma(t)\frac{\widehat{x}\widehat{p}+\widehat{p}\widehat{x}}{4}+\delta(t)\widehat{p}+\epsilon(t)\widehat{x}+\phi(t)\widehat{\mathrm{Id}},
\end{equation}
where $\alpha(t),\beta(t),\gamma(t),\delta(t),\epsilon(t),\phi(t)$ are arbitrary $t$-dependent functions, while $\widehat{x}, \widehat{p}, \widehat{\rm Id}$, are the position, moment, and identity operators on the space of complex square-integrable functions on $\mathbb{R}$, respectively. It is evident that (\ref{Eq:HamDis}) has many relevant applications in many quantum mechanical problems.

Although the relevant properties of $X^{do}$ are better understood using a geometric approach, a representation in coordinates of $X^{do}$ will also be useful to describe some of its features. 
More specifically, let us use the canonical coordinates of the second kind $\{v_1,\ldots,v_6\}$ on $G^{do}$ induced by (see \cite{Dissertationes} for details)
\begin{equation}
\label{Eq:SCPara}
\phi:\sum_{i=1}^6v_i{\rm a}^i\in \mathfrak{g}\mapsto \exp(-v_4{\rm a}^4)\exp(-v_5{\rm a}^5)\exp(-v_6{\rm a}^6)\exp(-v_1{\rm a}^1)\exp(-v_2{\rm a}^2)\exp(-v_3{\rm a}^3)\in G^{do},
\end{equation}
where $\{{\rm a}^1,\ldots,{\rm a}^6\}$ is a basis of $T_eG^{do}\simeq\mathfrak{g}^{do}$ spanning the commutation relations given in \cite[p.\,70]{Dissertationes} and $X_i^R(e)=-{\rm a}^i$ for $i=1,\ldots,6$.
This allows one to write system (\ref{Eq:Partial1}) in local coordinates as
\begin{equation}\label{Eq:WNQH}
  \begin{dcases}
    \frac{\d v_1}{\d t} = b_1(t)+b_2(t)\, v_1+b_3(t)\,v_1^2\,, & \frac{\d v_4}{\d t}=b_4(t)+\frac 12\, b_2(t)\, v_4+b_1(t)\,v_5\,,\\
    \frac{\d v_2}{\d t} = b_2(t)+2\,b_3(t)\,v_1\,, & \frac{\d v_5}{\d t}=b_5(t)-b_3(t)\, v_4-\frac 12\, b_2(t)\,v_5\,,\\
    \frac{\d v_3}{\d t} = e^{v_2}\,b_3(t)\,, & \frac{\d v_6}{\d t}=b_6(t)-b_5(t)\, v_4+\frac 12\, b_3(t)\,v_4^2-\frac 12\, b_1(t)\,v_5^2 \,.
  \end{dcases}
\end{equation}

Comparing the system $X^{do}$ in intrinsic form (\ref{Eq:Partial1}) with the above, we obtain
\begin{equation}\label{Eq:Rel}
\begin{aligned}
&X^R_1=\frac{\partial}{\partial v_1}+v_5\frac{\partial }{\partial v_4}-\frac{1}{2}v_5^2\frac{\partial }{\partial v_6}\,,
&X^R_2&=v_1\frac{\partial}{\partial v_1}+\frac{\partial }{\partial v_2}+\frac{1}{2}v_4\frac{\partial }{\partial v_4}-\frac{1}{2}v_5\frac{\partial }{\partial v_5}\,,\\
&X^R_3=v_1^2\frac{\partial }{\partial v_1}+2v_1\frac{\partial }{\partial v_2}+e^{v_2}\frac{\partial }{\partial v_3}-v_4\frac{\partial }{\partial v_5}+\frac{1}{2}v_4^2\frac{\partial }{\partial v_6}\,,
&X^R_4&=\frac{\partial }{\partial v_4}\,,
\\
&X^R_5=\frac{\partial }{\partial v_5}-v_4\frac{\partial }{\partial v_6}\,,
&X^R_6&=\frac{\partial }{\partial v_6}\,.
\end{aligned}
\end{equation}
The commutation relations between the above vector fields read
{\small
\begin{equation}\label{Eq:comG}
\begin{aligned}
&[X^R_1,X^R_2]=X^R_1\,, &&&&&&&&
\\
&[X^R_1,X^R_3]=2\, X^R_2\,,
&[X^R_2,X^R_3]&=X^R_3\,,
&& && &&
\\
&[X^R_1,X^R_4]=0\,,
&[X^R_2,X^R_4]&=-\frac 12\, X^R_4\,,
&[X^R_3,X^R_4]&=X^R_5\,,
&& &&
\\
&[X^R_1,X^R_5]=-X^R_4\,,
&[X^R_2,X^R_5]&=\frac 12\, X^R_5\,,
&[X^R_3,X^R_5]&=0\,,
&[X^R_4,X^R_5]&=-X^R_6\,,
&&
\\
&[X^R_1,X^R_6]=0\,,
&[X^R_2,X^R_6]&=0\,,
&[X^R_3,X^R_6]&=0\,,
&[X^R_4,X^R_6]&=0\,,
&[X^R_5,X^R_6]&=0\,.
\end{aligned}
\end{equation}
}

Therefore, the vector fields given in (\ref{Eq:Rel}) span a 6-dimensional real Lie algebra $V^{do}$ isomorphic to the semidirect sum of Lie algebras $\mathfrak{sl}_2\oplus_S\mathfrak{h}_3$.

Following the method described in \cite{GLMV19}, one can construct a multisymplectic form on $G^{do}$ turning the elements of $V^{do}$ into locally Hamiltonian vector fields. Since $(X^R_1\wedge\ldots\wedge X_6^R)(g)\neq 0$ on each $g\in G^{do}$, the vector fields $X^R_1,\ldots,X_6^R$ admit a family of differential one-forms $\eta^R_1,\ldots, \eta^R_6$ on $G^{do}$ such that $\eta^R_\alpha(X^R_\beta)=\delta_{\alpha\beta}$ for $\alpha,\beta=1,\ldots,6$, where $\delta_{\alpha\beta}$ is the Kronecker delta. More specifically,
the local expressions of these differential one-forms are
\[
\begin{array}{c}
\eta^R_1 = \d v_1 - v_1\d v_2 + v_1^2e^{-v_2}\d v_3\,,\,\,
\eta^R_2  = \d v_2 - 2v_1 e^{-v_2}\d v_3\,,\,\, \eta^R_3= e^{-v_2}\d v_3\,, \\\noalign{\medskip}
\eta^R_4 = -v_5\d v_1 + (v_1v_5-\frac{1}{2}v_4+\frac{1}{2}v_4v_5)\d v_2 + (v_1v_4-v_1^2v_5-v_1v_4v_5+v_4^2)e^{-v_2}\d v_3 + \d v_4 + v_4\d v_5\,,\\\noalign{\medskip}
\eta^R_5= \frac{1}{2}v_5\d v_2+(v_4-v_1v_5)e^{-v_2}\d v_3+\d v_5\,,\,\, \eta^R_6= \frac{1}{2}v_5\d v_1-\frac{1}{2}v_1v_5\d v_2+\frac{1}{2}(v_1^2v_5-v_4^2)e^{-v_2}\d v_3+\d v_6\,.
\end{array}
\]
As $\eta_1^R,\ldots,\eta_6^R$ are right-invariant differential forms, it makes sense to denote $V^{do*}:=\langle \eta^R_1,\ldots,\eta^R_6\rangle$.
This enables us to define the volume form
\begin{equation}\label{Eq:VolDis}
\Theta^{do}:=\eta^R_1\wedge\ldots\wedge \eta^R_6=e^{-v_2}\d v_1\wedge \ldots\wedge \d v_6\,.
\end{equation}
Since $\Theta^{do}$ is a volume form, it is clearly a multisymplectic one.

Let us prove that the vector fields $X^R_1,\ldots,X_6^R$ are locally Hamiltonian relative to~$\Theta^{do}$.
This is indeed a consequence of the geometric construction of $\Theta^{do}$ and the Lie algebra structure of~$\mathfrak{g}^{do}$.
We have that
$$
\left(\mathcal{L}_{X^R}\eta^R\right)(\bar{X}^R) =\mathcal{L}_{X^R}(\iota_{\bar X^R}\eta^R) -\eta^R([X^R,\bar{X}^R]) =-\eta^R([X^R,\bar{X}^R])\,,\quad
\forall X^R,\bar{X}^R\in V^{do}\,,\  \forall \eta^R\in V^{do*}\,.
$$
Since all the $\bar{X}^R$ span the tangent bundle to $G^{do}$, it follows that
$$
\mathcal{L}_{X^R}\eta^R=-{\rm ad}_{X^R}^*\eta^R,\qquad
\forall X^R\in V^{do}\,,\quad \forall \eta^R\in V^{do*}\,.
$$
The adjoint representation
${\rm ad} \colon X^R \in V^{do} \mapsto {\rm ad}_{X^R}:=[X^R,\cdot ] \in {\rm End}(V^{do})$
is such that ${\rm ad}(V^{do})$ consists of traceless operators on~$V^{do}$. Hence, the adjoint representation of $V^{do}$ is traceless and $V^{do}$ is a unimodular Lie algebra. Then, $V^{do}$ acts on $V^{do*}$ through the coadjoint Lie algebra representation
${\rm coad} \colon X^R \in V^{do*}\mapsto -{\rm ad}^*_{X^R} \in {\rm End}(V^{do*})$. Since ${\rm ad}$ is traceless, the representation ${\rm coad}$ is also traceless.
Thus,
$$
\mathcal{L}_{X^R}\Theta^{do}=-{\rm Tr}({\rm ad}_{X^R}^*)\Theta^{do}=0\,,\qquad \forall X^R\in V^{do}\,.
$$
As a consequence, the elements of $V^{do}$ are locally Hamiltonian vector fields with respect to the multisymplectic volume form~$\Theta^{do}$. The above construction gives rise to a Lie system on $G^{do}$ admitting a Vessiot--Guldberg Lie algebra of locally Hamiltonian vector fields $X^R_1,\ldots,X^R_6$ relative to the multisymplectic form~$\Theta^{do}$, that is $(G^{do}=SL_2\ltimes H_3,\Theta^{do},X^{do})$ is a multisymplectic Lie system.

The multisymplectic form, $\Theta^{do}$, given in (\ref{Eq:VolDis}), and the elements of $V^{do}$, detailed in (\ref{Eq:Rel}), are right-invariant. Hence,  for every left-invariant vector field $Y^L$ on~$G^{do}$, one has that $\mathcal{L}_{Y^L}\Theta^{do}=0$ and $\mathcal{L}_{Y^L}X^{do}=0$. Then, $Y_L$ is a Lie symmetry of $V^{do}$, that is,  $Y^L\in {\rm Sym}(V^{do})$. Lemma \ref{lemma:Exis} yields that $V^{ do}_L:={\rm Sym}(V^{do})$ is isomorphic to $V^{do}$. As shown next, this will lead to the construction of multisymplectic reductions of the multisymplectic Lie system $(G^{do},\Theta^{do},X^{do})$ by unimodular Lie subalgebras of $V^{  do}_L\simeq\mathfrak{sl}_2\oplus_S\mathfrak{h}_3$.

Let $\{X^L_1,\ldots,X^L_6\}$ be a basis of left-invariant vector fields on $G^{do}$ satisfying that
\begin{equation}\label{Eq:sameL}
X^L_\alpha(e)=X^R_\alpha(e)\,, \qquad \alpha=1,\ldots,6\,.
\end{equation}
A tedious calculation yields the local coordinate expression of these vector fields:
\begin{equation}\label{Eq:Rel2}
\begin{gathered}
X^L_1=e^{v_2}\frac{\partial}{\partial v_1}+2v_3\frac{\partial }{\partial v_2}+v_3^2\frac{\partial }{\partial v_3}\,,\quad
X^L_2=\frac{\partial}{\partial v_2}+v_3\frac{\partial }{\partial v_3}\,,\quad
X^L_3=\frac{\partial }{\partial v_3}\,,\\
X^L_4=e^{-v_2/2}(e^{v_2}-v_1v_3)\frac{\partial }{\partial v_4}-e^{-v_2/2}v_3\frac{\partial}{\partial v_5}-e^{-v_2/2}(e^{v_2}-v_1v_3)v_5\frac{\partial }{\partial v_6}\,,\\X^L_5=v_1e^{-v_2/2}\frac{\partial }{\partial v_4}+e^{-v_2/2}\frac{\partial}{\partial v_5}-v_1v_5e^{-v_2/2}\frac{\partial}{\partial v_6}\,,\quad X^L_6=\frac{\partial }{\partial v_6}\,.
\end{gathered}
\end{equation}
Note that $V^{do}_L=\langle X^L_1,\ldots,X^L_6\rangle$. The above properties will be useful in the following results.

 As an example of reduction for the multisymplectic Lie system $(G^{do},\Theta^{do},X^{do})$, we consider the unimodular Lie subalgebra $V^L_{sl}:=\langle X^L_1,X^L_2,X^L_3\rangle\simeq\mathfrak{sl}_2$ and an associated Noether invariant $\Theta^{sl}_d:=\iota_{X^L_1\wedge X^L_2\wedge X^L_3}\Theta^{do}$
of the Lie system $X^{do}$.
The idea is to obtain a new multisymplectic Lie system $(\widetilde{G}, \widetilde{\Theta}^{sl}_d,\widetilde{X}^{sl}_d)$ such that the following conditions hold:
\begin{enumerate}
\item If $G^{do}/SL_2$ stands for the manifold of left cosets relative to the subgroup $SL_2$ of $G^{do}$,
then the canonical projection $\pi^{sl}\colon G^{do}\to \widetilde{G}:=G^{do}/SL_2$ is a submersion and $\widetilde{G}$ becomes a manifold, which does not need to have a Lie group structure.
\item
The vector fields of the Vessiot--Guldberg Lie algebra $V^{do}$ project onto a Lie algebra $\widetilde{V}^{sl}_d:=(\pi^{sl})_*(V^{do})$ of vector fields on~$\widetilde{G}$.
\item  Finally,
$(\pi^{sl})^*\widetilde{\Theta}^{sl}_d={\Theta}^{sl}_d=\iota_{X^L_1\wedge X^L_2\wedge X^L_3}\Theta^{do}$.
\end{enumerate}

The action of $SL_2$ on $G^{do}$ on the right has fundamental vector fields $\langle X_1^L,X_2^L,X_3^L\rangle$. The orbits of this action are the left cosets of $SL_2$ in $G^{do}$. Furthermore, the quotient space $G^{do}/SL_2$ is diffeomorphic to $H_3$. 
Thus, the first condition is satisfied.

Let us verify the second condition.
Since the elements of $V^{do}$ commute with left-invariant vector fields on~$G^{do}$, they are all $\pi^{sl}$-projectable vector fields onto $G^{do}/SL_2$.
As $V^{do}=\langle X^R_1,\ldots,X_6^R\rangle$ is a Lie algebra whose distribution, $\mathcal{D}^{V^{do}}$, is such that $\mathcal{D}^{V^{do}}_g=T_gG^{do}$ for every $g\in G^{do}$, the projections of $X^R_1,\ldots,X_6^R$ onto $G^{do}/SL_2\simeq H_3$ span $T(G^{do}/SL_2)$.
Hence, the initial system $X^{do}$ can be projected onto $G^{do}/SL_2$ giving rise to a Lie system on $G^{do}/SL_2$ with a Vessiot--Guldberg Lie algebra  $\widetilde{V}_d^{sl}=(\pi^{sl})_{*}(V^{do})$.
The morphism $(\pi^{sl})_*\vert_{V^{do}}:V^{do}\rightarrow \widetilde{V}^{sl}_d$ is a surjective Lie algebra morphism.

Note that $v_1,\ldots,v_6$ form a canonical coordinate system of the second kind defined locally in $G^{do}$ whilst, in view of our parametrisation (\ref{Eq:SCPara}), the  $v_4,v_5,v_6$ are, indeed global, coordinates in  $G^{do}/SL_2$. 
In coordinates $v_1,\ldots,v_6$, we have the canonical projection
$\pi^{sl} \colon (v_1,\ldots,v_6) \in G^{do} \mapsto (v_4,v_5,v_6) \in G^{do}/SL_2\simeq H_3$.
The projections of the elements of the basis (\ref{Eq:Rel}) of the Vessiot--Guldberg Lie algebra $V^{do}$ of $X^{do}$ onto $G^{do}/SL_2$ read
\begin{equation}\label{Eq:Rel3}
\begin{aligned}
\widetilde{X}^{sl}_1&=v_5\frac{\partial }{\partial v_4}-\frac{1}{2}v_5^2\frac{\partial }{\partial v_6}\,,
&\widetilde{X}^{sl}_2&=\frac{1}{2}v_4\frac{\partial }{\partial v_4}-\frac{1}{2}v_5\frac{\partial }{\partial v_5}\,,
&\widetilde{X}^{sl}_3&=-v_4\frac{\partial }{\partial v_5}+\frac{1}{2}v_4^2\frac{\partial }{\partial v_6}\,,\\
\widetilde{X}^{sl}_4&=\frac{\partial }{\partial v_4}\,,
&\widetilde{X}^{sl}_5&=\frac{\partial }{\partial v_5}-v_4\frac{\partial }{\partial v_6}\,,
&\widetilde{X}^{sl}_6&=\frac{\partial }{\partial v_6}\,.
\end{aligned}
\end{equation}

A simple calculation shows that
$$
[\widetilde{X}^{sl}_\alpha,\widetilde{X}^{sl}_\beta]=c_{\alpha\beta}^\gamma\widetilde{X}^{sl}_\gamma\,, \qquad \alpha,\beta,\gamma=1,\ldots, 6\,,
$$
where the $c_{\alpha\beta}^\gamma$ are again the same constants given in (\ref{Eq:comG}) for $V^{do}$.
This shows that $\widetilde{V}^{sl}_{d}\simeq V^{do}\simeq \mathfrak{g}^{do}$.
This gives rise to a Lie system $X^{G^{do}}_{sl}:=\sum_{\alpha=1}^6b_\alpha(t)\widetilde{X}^{sl}_\alpha$ on the quotient space $G^{do}/SL_2$.

Finally, let us prove that there exists a multisymplectic form $\widetilde{\Theta}^{sl}_d$ on $G^{do}/SL_2$ turning the elements of $\widetilde{V}^{sl}_d$ into locally Hamiltonian vector fields. Since $V_{sl}^L$ is a Lie subalgebra, it makes sense to define the restriction
${\rm ad}_{X^L_\alpha}|_{V_{sl}^L}$
of each
${\rm ad}_{X^L_\alpha} \colon V_L^{do} \rightarrow V_L^{do}$,
with $\alpha=1,\ldots,6$, to $V_{sl}^L$.
In view of the commutation relations (\ref{Eq:comG}) of the vector fields of $V^{do}$, it follows that ${\rm Tr}({\rm ad}_{X^L_\alpha}|_{V_{sl}^L})=0$. Therefore,
\begin{equation}
\label{Eq:1}
\mathcal{L}_{X_\alpha^L}\big(\iota_{X^L_1\wedge X^L_2\wedge X^L_3}\Theta^{do}\big)={\rm Tr} \big({\rm ad}_{X_\alpha^L}|_{V^L_{sl}}\big)\iota_{X_1^L\wedge X_2^L\wedge X_3^L}\Theta^{do}=0\,,\qquad \alpha=1,2,3\,.
\end{equation}
Moreover,
$
\Theta^{sl}_d:=\iota_{X^L_1\wedge X^L_2\wedge X^L_3}\Theta^{do}
$
satisfies
\begin{equation}
\label{Eq:2}
\big(\ker \Theta^{sl}_d\big)(g)=\langle X^L_1(g),X^L_2(g),X^L_3(g)\rangle\,,\qquad \forall g\in G^{do}\,.
\end{equation}
Expressions (\ref{Eq:1}) and (\ref{Eq:2}) ensure that there exists a unique well-defined differential form $\widetilde{\Theta}^{sl}_d$ on $G^{do}/SL_2$ such that $(\pi^{sl})^*\widetilde{\Theta}^{sl}_d=\Theta^{sl}_d$ relative to the canonical projection $\pi^{sl}:G^{do}\rightarrow G^{do}/SL_2$. As the vector fields of $\widetilde{V}^{sl}_d$ span the tangent bundle to $G^{do}/SL_2$ and $\widetilde \Theta^{sl}_d$ is invariant under the action by Lie derivatives of the elements of $\widetilde{V}^{sl}_d$ on it, then $\widetilde \Theta^{sl}_d$ is the only multisymplectic form, up to a non-zero proportionality constant, turning the vector fields of $\widetilde{V}^{sl}_d$ into locally Hamiltonian vector fields relative to $\widetilde \Theta^{sl}_d$ .
Since $\Theta^{do}$ is a volume form on a 6-dimensional manifold, then $\widetilde{\Theta}^{sl}_d$ is a differential three-form. Therefore, $\widetilde \Theta^{sl}_d$ is a volume form on $G^{do}/SL_2$ and 
$
\d\widetilde \Theta^{sl}_d=0.
$
In view of (\ref{Eq:2}), it follows that $\widetilde\Theta^{sl}_d$ is nondegenerate. Hence,  $(G^{do}/SL_2,\widetilde \Theta^{sl}_d)$ is a multisymplectic manifold.

The differential form $\widetilde \Theta^{sl}_d$ can be explicitly derived by taking into account that it is, up to a non-zero constant, invariant by the above projected vector fields (\ref{Eq:Rel3}) spanning the Lie algebra $\widetilde{V}_d^{sl}$, namely  $\widetilde\Theta^{sl}_d=\lambda \d v_4\wedge \d v_5\wedge \d v_6 $  for a certain $\lambda \in \mathbb{R}\backslash \{0\}$ and $\iota_{X^L_4\wedge X^L_5\wedge X_6^L}(\pi^{sl})^*\widetilde{\Theta}^{sl}_d=1$ when $v_4=v_5=v_6=0$. Hence,
$$
\widetilde\Theta^{sl}_d=\d v_4\wedge \d v_5\wedge \d v_6 \,.
$$
It is worth noting that the multisymplectic reduction process has been carried out  assuming $V^L_{sl}$ to be unimodular.

\subsection{Multisymplectic reduction of multisymplectic Lie systems}

After studying the above examples, the following question arises: Can the above process be repeated for other multisymplectic Lie subalgebras of $V^{do}$? To answer this question, we define the momentum map
$$
\begin{array}{rcccl}
\mathcal{J}:&G\times \mathcal{G}(\mathfrak{g})&\longrightarrow &\bigwedge^\bullet T^*G\\
&(g,X_1^L\wedge\ldots\wedge X_p^L)&\longmapsto&(\iota_{X^L_1\wedge\ldots\wedge X^L_p}\Theta)(g)\,, & 
\end{array}
$$
where $\mathcal{J}(g,X_1^L\wedge\ldots X_p^L)\in\bigwedge^\bullet T^*_gG$, 
the set $\mathcal{G}(\mathfrak{g})$  is the unimodular Grassmannian defined in Definition \ref{def:grassman} and $\Theta$ is a multisymplectic form on~$G$.

In the example in Section \ref{subsubsection:SE}, we considered the space $\mathcal{J}(g,Y_2)$, while $\mathcal{J}(g,X^L_1\wedge X^L_2\wedge X^L_3)$ and $\mathcal{J}(g,X^L_4\wedge X^L_5\wedge X^L_6)$ were used in the two multisymplectic reductions of the example in Section \ref{subsubsection:DQHO}. In both cases, $\mathfrak{g}$ is the abstract Lie algebra isomorphic to the Lie algebra of Lie symmetries ${\rm Sym}(V^X)$, where $V^X$ is the Vessiot--Guldberg Lie algebra of the multisymplectic Lie system. Furthermore, it can be shown that in both cases
$$
\mathcal{L}_{X}\mathcal{J} = 0\,,
\qquad \forall X\in V^X\,.
$$

This section aims to use the examples of reduction of multisymplectic Lie systems already studied to provide a general multisymplectic reduction procedure for multisymplectic Lie systems.

\subsubsection{Multisymplectic reduction}

Let $(N, \Theta, X)$ be a multisymplectic Lie system, where $(N, \Theta)$ is a multisymplectic manifold and $X$ denotes a Lie system on~$N$. Let $V^X$ be a Vessiot--Guldberg Lie algebra of $X$ and denote by ${\rm Sym}(V^X)$ the finite-dimensional Lie algebra of Lie symmetries of a Lie system $(N,X,V^X)$. Consider $\mathfrak{g}$ to be the abstract Lie algebra isomorphic to ${\rm Sym}(V^X)$.

To obtain a general reduction procedure for multisymplectic Lie systems, one considers the map
$$
\begin{array}{rcccl}
\mathcal{J}:&N\times \mathcal{G}(\mathfrak{g})&\longrightarrow &\bigwedge^\bullet T^*N\\
&(x,w)&\longmapsto&(\iota_{w}\Theta)(x) \,,
\end{array}
$$
where $(\iota_{w}\Theta)(x)\in \bigwedge^\bullet T^*_xN\,$ for every $x\in N$ and $w\in\mathcal{G}(\mathfrak{g})$.

Let $V$ be an $r$-dimensional unimodular Lie subalgebra of ${\rm Sym}(V^X)$ and let $\mathcal{D}^V$ be the distribution generated by the vector fields of $V$.  We consider the reduction of the original system by~$V$.
The reduced system is a multisymplectic Lie system if:
\begin{enumerate}
\item
The quotient space $N/\mathcal{D}^V$ is a manifold and the canonical projection $\pi\colon N\to N/\mathcal{D}^V$ is a submersion.
\item
The elements of a basis of $V$ are $\pi$-projectable.
\item
There exists a multisymplectic form $\widetilde{\Theta}$ on $N/\mathcal{D}^V$
such that
$(\pi^*\widetilde{\Theta})(x)=\mathcal{J}(x,X_1\wedge\dotsb \wedge X_r)$, for every $x\in N$ and where $X_1\wedge\dotsb \wedge X_r$ is the unique (up to an irrelevant multiplicative non-zero constant) $r$-vector field generated by the elements of $V$.
\end{enumerate}

The idea is to give conditions in terms of $\Theta$, $V$, and $\mathcal{J}$ so that these three properties are fulfilled. The examples in the previous section can be considered as an application of the following reduction procedure.

\begin{theorem}\label{th:RedMul} Let $\Theta$ be an $(r+1)$-nondegenerate multisymplectic form on $N$ and let $G$ be an $r$-dimensional unimodular Lie group. We consider a proper and free unimodular Lie group action 
$\Phi \colon G \times N \rightarrow N$ and denote by $V_\Phi$ the Lie algebra of fundamental vector fields of $\Phi$. If 
 $w\in \Lambda^{r}V_\Phi\backslash\{0\}$ 
is a Hamiltonian $r$-vector field with respect to the multisymplectic form $\Theta$, then
there exists a unique multisymplectic form $\widetilde \Theta$ on $N/G$ given by
$$
\pi^*\widetilde{\Theta} = \iota_w\Theta\,
$$
where $\pi\colon N \to N/G$ is the quotient projection of $G$ onto  $N/G$.
\end{theorem}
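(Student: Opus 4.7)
The plan is to show that $\iota_w\Theta$ is basic with respect to $\pi$, so that it descends to a unique form $\widetilde\Theta$ on $N/G$; then verify closedness, and finally deduce $1$-nondegeneracy from the $(r+1)$-nondegeneracy of $\Theta$. Since $\Phi$ is proper and free, $N/G$ is a smooth manifold, $\pi\colon N\to N/G$ is a surjective submersion, and the vertical distribution of $\pi$ coincides with $V_\Phi$, which has constant rank $r$. The line bundle $\Lambda^r V_\Phi\subset\Lambda^r\mathrm{T}N$ is then trivialised by any basis $X_1,\ldots,X_r$ of $V_\Phi$, and up to a nonzero factor I may write $w = X_1\wedge\cdots\wedge X_r$, so in particular $w(x)\neq 0$ for every $x\in N$.

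Horizontality of $\iota_w\Theta$ is immediate: for any $\xi_N\in V_\Phi$ one has $\xi_N\wedge w=0$, hence $\iota_{\xi_N}(\iota_w\Theta)=\pm\,\iota_{\xi_N\wedge w}\Theta=0$. For $G$-invariance I would exploit the Hamiltonian hypothesis $\iota_w\Theta=\d\beta$, which gives $\d(\iota_w\Theta)=0$, and apply Cartan's magic formula to obtain
$$
\mathcal{L}_{\xi_N}(\iota_w\Theta)=\iota_{\xi_N}\,\d(\iota_w\Theta)+\d\,\iota_{\xi_N}(\iota_w\Theta)=0
$$
for every $\xi_N\in V_\Phi$. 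Unimodularity of $G$ ensures that its adjoint action on the one-dimensional space $\Lambda^r V_\Phi$ is trivial, so $w$ itself is $G$-invariant; combined with the infinitesimal invariance above, this upgrades to full $G$-invariance of $\iota_w\Theta$. Being horizontal and $G$-invariant, $\iota_w\Theta$ is basic, yielding a unique $\widetilde\Theta\in\Omega^{k-r}(N/G)$ (where $k=\deg\Theta$) with $\pi^*\widetilde\Theta=\iota_w\Theta$. Closedness then follows from $\pi^*\,\d\widetilde\Theta=\d\,\pi^*\widetilde\Theta=\d(\iota_w\Theta)=0$ and the injectivity of $\pi^*$ on forms.

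The main obstacle is the $1$-nondegeneracy of $\widetilde\Theta$. Given $\tilde v\in\mathrm{T}_{[x]}(N/G)$ with $\iota_{\tilde v}\widetilde\Theta=0$, I would pick any lift $v\in\mathrm{T}_x N$ with $\mathrm{T}_x\pi(v)=\tilde v$ and evaluate the identity $\pi^*\widetilde\Theta=\iota_w\Theta$ at $x$ on $v$ together with arbitrary further tangent vectors; since $\pi$ is a submersion, those further vectors project onto arbitrary vectors at $[x]$, so the hypothesis forces $\iota_{v\wedge w}\Theta\big|_x=0$. The $(r+1)$-nondegeneracy of $\Theta$, interpreted as injectivity of $\eta\mapsto\iota_\eta\Theta$ on $(r+1)$-multivectors, then yields $v\wedge w=0$; because $w(x)\neq 0$ this means $v\in\mathrm{span}\{X_1(x),\ldots,X_r(x)\}=V_\Phi(x)$ is vertical, whence $\tilde v=\mathrm{T}_x\pi(v)=0$. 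The only genuinely delicate step is this last one, where one must verify that the stated notion of $(r+1)$-nondegeneracy is strong enough to conclude $v\wedge w=0$ from $\iota_{v\wedge w}\Theta=0$; the other steps are routine bookkeeping with basic forms and Cartan's identity.
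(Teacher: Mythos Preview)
Your proof is correct and follows the same overall architecture as the paper's (horizontality, invariance, uniqueness via injectivity of $\pi^*$, closedness, then $1$-nondegeneracy), but your invariance step takes a genuinely different and somewhat sharper route. The paper establishes $\mathcal{L}_X(\iota_w\Theta)=0$ via the Leibniz-type identity $\mathcal{L}_X\iota_w\Theta=(-1)^r\iota_w\mathcal{L}_X\Theta+\iota_{\mathcal{L}_Xw}\Theta$, which requires \emph{both} that each fundamental vector field preserve $\Theta$ and that $\mathcal{L}_Xw=0$ (the latter being where unimodularity enters). You instead use Cartan's formula directly on the form $\iota_w\Theta$: since $w$ is Hamiltonian, $\iota_w\Theta$ is exact hence closed, and horizontality kills the other term. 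This has the pleasant consequence that your argument for infinitesimal invariance does \emph{not} use unimodularity at all; your subsequent remark that unimodularity makes $w$ itself $G$-invariant is then superfluous (under the paper's standing assumption that all manifolds, hence $G$, are connected, infinitesimal invariance already yields full $G$-invariance), and in any case would not by itself upgrade invariance of $\iota_w\Theta$ without also knowing $\Theta$ is $G$-invariant. Finally, your treatment of $1$-nondegeneracy---lifting $\tilde v$, deducing $\iota_{v\wedge w}\Theta=0$, invoking $(r{+}1)$-nondegeneracy to get $v\wedge w=0$, and concluding $v$ is vertical---is more explicit than the paper's, which asserts the conclusion in one line; your version makes clear exactly how the $(r{+}1)$-nondegeneracy hypothesis is consumed.
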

\begin{proof}
The differential form $\iota_w\Theta$ can be projected onto $N/G$ if its contraction with the elements of $V_\Phi$ vanishes and $\iota_w\Theta$ is invariant relative to the elements of $V_\Phi$. The first condition is ensured by the fact that $w\in \Lambda^{r}V_\Phi$ and therefore
$
\iota_{X}\iota_w\Theta=\iota_{w\wedge X}\Theta=0$ for all $X\in V_\Phi$ because $X\wedge w=0$. 
 The second condition follows from the following considerations. By assumption, the vector fields $X_1,\ldots,X_r$ are $\Theta$-Hamiltonian, i.e. $\mathcal{L}_{X}\Theta=0$ for every $X\in V_\Phi$, and $V_\Phi$ is a unimodular Lie algebra of vector fields, namely $\mathcal{L}_Xw= 0$ for all $X\in V_\Phi$. This implies that
$$
\mathcal{L}_{X}\iota_w\Theta =
(-1)^{r}\iota_w\mathcal{L}_{X}\Theta+\iota_{\mathcal{L}_Xw}\Theta=0\,,
\qquad \forall X\in V_\Phi\,.
$$
In consequence, $\mathcal{L}_X\Theta=0$ and $\iota_X\iota_w\Theta=0$  for all $X\in V_\Phi$, which implies that there exists a differential form $\widetilde{\Theta}$ on $N/G$ such that $\pi^*\widetilde{\Theta}=\iota_w\Theta$.
The differential form $\widetilde \Theta$ is also unique because
$\pi^* \colon \Omega^\bullet(N/G)\rightarrow \Omega^\bullet(N)$
is an injection.

Moreover, $\widetilde{\Theta}$ is closed since the assumption that $w$ is Hamiltonian enables us to write that
$$
0=\mathcal{L}_w\Theta = (\d\iota_w+(-1)^{r}\iota_w\d)\Theta = \d\iota_w\Theta = \d\pi^*\widetilde{\Theta} = \pi^*\d\widetilde \Theta
\quad\Longrightarrow\quad
\d\widetilde \Theta=0\,.
$$

Since $\Theta$ is $(r+1)$-nondegenerate and $\Phi$ is effective, then $w\neq 0$ and the form $\iota_w\Theta$ is one-nondegenerate. Hence, $\widetilde{\Theta}$ is also one-nondegenerate. This turns $(N/G,\widetilde{\Theta})$ into a multisymplectic manifold.
\end{proof}

\begin{definition} A pair $(\Theta,\Phi)$ satisfying the conditions of the Theorem \ref{th:RedMul} is called a {\it multisymplectic reduction scheme}.
  \end{definition}

Once a reduction scheme is provided, it is mandatory to show how this can be applied to the reduction of a multisymplectic Lie system.
\begin{theorem} \label{th:RedLie}Let $(N,\Theta,X)$ be a multisymplectic Lie system and let $(\Theta,\Phi)$ be a multisymplectic reduction scheme whose $\{\Phi_g\}_{g\in G}$ are symmetries of  the elements of $V^X$. Then, $X$ can be projected onto $N/G$ giving rise to a new multisymplectic Lie system $(N/G,\widetilde \Theta, \pi_*X)$, where $\widetilde{\Theta}$ is the multisymplectic form on $N/G$ induced by the multisymplectic reduction scheme $(\Theta,\Phi)$.
  \end{theorem}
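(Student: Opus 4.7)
My plan is to verify three items in turn: (i) that the Lie system $X$ is $\pi$-projectable, so $\pi_*X$ is a well-defined $t$-dependent vector field on $N/G$; (ii) that $\pi_*X$ is still a Lie system on $N/G$; and (iii) that its smallest Lie algebra consists of locally Hamiltonian vector fields with respect to $\widetilde{\Theta}$. Together these establish that $(N/G,\widetilde{\Theta},\pi_*X)$ is a multisymplectic Lie system.

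For (i) and (ii), I would first unpack the assumption that every $\Phi_g$ is a symmetry of every element of $V^X$: infinitesimally this gives $[Z,Y]=0$ for each fundamental vector field $Z\in V_\Phi$ and each $Y\in V^X$. Since $\Phi$ is free and proper (this is built into the notion of multisymplectic reduction scheme used in Theorem \ref{th:RedMul}), $\pi$ is a surjective submersion with $\ker T\pi$ spanned pointwise by $V_\Phi$; hence each $Y\in V^X$ is $\pi$-projectable. In particular every $X_t$ projects, yielding a $t$-dependent vector field $\pi_*X$ on $N/G$. Because $\pi_*\colon V^X\to \mathfrak{X}(N/G)$ is a Lie algebra homomorphism, its image $\pi_*V^X$ is a finite-dimensional real Lie algebra of vector fields containing $\{(\pi_*X)_t\}_{t\in\mathbb{R}}$, so the Lie--Scheffers theorem (Theorem \ref{th:characterization_superposition_rule}) gives that $\pi_*X$ is a Lie system with Vessiot--Guldberg Lie algebra $\pi_*V^X$.

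For (iii), fix $Y\in V^X$. Using that $Y$ and $\pi_*Y$ are $\pi$-related and that $\pi^*\widetilde{\Theta}=\iota_w\Theta$ by Theorem \ref{th:RedMul}, I compute
\[
\pi^*\bigl(\mathcal{L}_{\pi_*Y}\widetilde{\Theta}\bigr) \;=\; \mathcal{L}_Y(\iota_w\Theta) \;=\; \iota_{\mathcal{L}_Y w}\Theta \,+\, \iota_w\,\mathcal{L}_Y\Theta .
\]
The second term vanishes because $Y\in V^X$ is locally Hamiltonian with respect to $\Theta$, by definition of a multisymplectic Lie system. For the first, I write $w=X_1\wedge\cdots\wedge X_r$ with $X_i\in V_\Phi$; the symmetry hypothesis gives $[Y,X_i]=0$, so $\mathcal{L}_Y w=\sum_i X_1\wedge\cdots\wedge[Y,X_i]\wedge\cdots\wedge X_r=0$. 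Since $\pi^*$ is injective on $\Omega^\bullet(N/G)$, I conclude $\mathcal{L}_{\pi_*Y}\widetilde{\Theta}=0$. Hence every element of $\pi_*V^X$, and in particular every element of the smallest Lie algebra $V^{\pi_*X}\subseteq\pi_*V^X$, is locally Hamiltonian relative to $\widetilde{\Theta}$.

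The step I expect to be most delicate is (iii): although the identity $\mathcal{L}_Y\iota_w=\iota_{\mathcal{L}_Y w}+\iota_w\mathcal{L}_Y$ is standard, the subtle point is that the two distinct hypotheses of the theorem each kill exactly one of the two terms it produces---neither would suffice alone. The rest reduces to routine bookkeeping once $\pi$-projectability of $V^X$ is secured by combining freeness and properness of $\Phi$ with the symmetry hypothesis.
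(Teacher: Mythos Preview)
Your proof is correct and follows essentially the same route as the paper's: project $V^X$ using the symmetry hypothesis, observe that $\pi_*$ is a Lie algebra morphism so $\pi_*V^X$ is finite-dimensional, and then verify $\mathcal{L}_{\pi_*Y}\widetilde{\Theta}=0$ by pulling back via $\pi^*$ and using $\mathcal{L}_Y(\iota_w\Theta)=0$. If anything, your step~(iii) is more explicit than the paper's---you spell out the identity $\mathcal{L}_Y\iota_w\Theta=\iota_{\mathcal{L}_Yw}\Theta+\iota_w\mathcal{L}_Y\Theta$ and identify which hypothesis kills which term, whereas the paper simply asserts $\mathcal{L}_X\iota_{X_1\wedge\cdots\wedge X_r}\Theta=0$ and moves on.
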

\begin{proof} If the $\Phi_g$, with $g\in G$, are symmetries of the elements of $V^X$, in particular they are symmetries of $X$ because it takes values in~$V^X$. Hence, $X$ can be projected onto $N/G$. If $\pi:N\rightarrow N/G$ is the quotient projection, then $\pi_*|_{V^X}:V^X\rightarrow \mathfrak{X}(N/G)$ is a Lie algebra morphism and $\pi_*(V^X)$ is therefore finite-dimensional turning $\pi_*X$ into a Lie system. Since $(N,\Theta,X)$ is a multisymplectic Lie system and the elements of $V^X$ commute with the fundamental vector fields of the Lie group action $\Phi$, one has that $\mathcal{L}_X\iota_{X_1\wedge\ldots \wedge X_r}\Theta=0$ and, as $X$ and $\Theta$ are projectable onto $N/G$, then $\mathcal{L}_{\pi_*X}\widetilde \Theta=0$. Therefore $(N/G,\widetilde \Theta,\pi_*X)$ is a multisymplectic Lie system.
  \end{proof}

It is frequently accessible to find Lie symmetries for a Lie system by using Lie theory methods. Nevertheless, the crux is to find a multisymplectic form satisfying the conditions provided in Theorem \ref{th:RedLie}.

\begin{subsubsection}{Reductions of a control system}

Consider again the control system given in Example \ref{exampleCS}. Let us now apply our multisymplectic reduction to this control system.
Recall that the vector fields $X_1,\ldots,X_5$ span, in view of their unique non-vanishing commutation relations (\ref{Eq:ConRel}), a unimodular Lie algebra $V$. Moreover, $\mathcal{D}^V={\rm T}\mathbb{R}^5$. In consequence, this control system is a locally automorphic one and the application to it of the techniques devised in \cite{GLMV19} gives rise to a multisymplectic Lie system $(\mathbb{R}^5,\Theta^B,X)$. For example, we can construct the multisymplectic structure
$$
\Theta^B=\eta_1\wedge\eta_2\wedge\eta_3\wedge\eta_4\wedge\eta_5\,,
$$
where $\eta_1,\ldots,\eta_5$ is a dual basis to $X_1,\ldots,X_5$ given by
$$
 \begin{gathered}
 \eta_1 = \d x_1\,,\qquad
 \eta_2 = \d x_2\,,\qquad
 \eta_3 = -x_1\d x_2+\d x_3 \,,
 \\
 \eta_4 = x_1^2\d x_2-2x_1\d x_3+\d x_4\,,\qquad
 \eta_5 = -{2x_2}\d x_3+\d x_5 \,.
 \end{gathered}
 $$
In coordinates,
$$
\Theta^B = \d x_1\wedge\dotsb\wedge \d x_5 \,.
$$

Let us use the Lie symmetries of this system to accomplish several of its possible reductions. Recall that the Lie symmetries of the control systems (\ref{Eq:ControlSys}) are given in (\ref{Eq:SymCS}).
 Indeed, after a long but simple calculation, we obtain a Lie algebra $V^S$ of Lie symmetries spanned by
 $$
 \begin{gathered}
 Y_1=\frac{\partial}{\partial x_1}+x_2\frac{\partial}{\partial x_3}+2x_3\frac{\partial}{\partial x_4}+x_2^2\frac{\partial}{\partial x_5},\qquad Y_2=\frac{\partial}{\partial x_2}+2x_3\frac{\partial}{\partial x_5},\qquad
 \\
 Y_3=\frac{\partial}{\partial x_3},\qquad Y_4=\frac{\partial}{\partial x_4},\qquad Y_5=\frac{\partial}{\partial x_5}.
 \end{gathered}
 $$
Recall also that
$$
Y_\alpha(0)=X_\alpha(0),\qquad \alpha=1,\ldots,5
$$
and the only non-vanishing commutation relations between the previous vector fields read
\[
[Y_1,Y_2]=-Y_3,\qquad [Y_1,Y_3]=-2Y_4,\qquad [Y_2,Y_3]=-2Y_5.
\]
Hence,  $V^S=\langle Y_1,\ldots,Y_6\rangle\simeq V$.
Let $\mathfrak{g}_C$ be the abstract Lie algebra isomorphic to $V$.
This leads us to define a momentum map
$$
\begin{array}{rcccc}
\mathcal{J} \colon & \mathbb{R}^5\times \mathcal{G}(\mathfrak{g}_C) &
\longrightarrow & \bigwedge^\bullet T^*\mathbb{R}^5
\\
& (\xi,w) &
\longmapsto & \iota_{w}\Theta^B(\xi) \,.
\end{array}
$$

Consider the unimodular subalgebra $V^L_{45}:=\langle Y_4,Y_5\rangle$ and the associated Noether invariant of $X^G$:
$$
\bar\Theta_{45} := \iota_{Y_4\wedge Y_5}\Theta^B={ \d x_1}\wedge \d x_2\wedge \d x_3 \,.
$$

The Lie algebra $\langle Y_4,Y_5\rangle \simeq \mathbb{R}^2$ can be considered as the Lie algebra of fundamental vector fields of a proper and free unimodular Lie group action
$$
\Phi_{45}:(\lambda_1,\lambda_2;x_1,x_2,x_3,x_4,x_5)\in \mathbb{R}^2\times \mathbb{R}^5\longmapsto (x_1,x_2,x_3,x_4+\lambda_1,x_5+\lambda_2)\in \mathbb{R}^5.
$$
Moreover, the space $\Lambda^2V_\Phi=\langle Y_4\wedge Y_5\rangle$ is Hamiltonian. Since $\Theta^B$ is 5-nondegenerate, the pair 
$(\Theta^B,\Phi_{45})$ 
becomes a reduction scheme and Theorem \ref{th:RedMul} can be applied to reduce this multisymplectic structure to $\mathbb{R}^5/\mathbb{R}^2$. 
More specifically, the orbits of the action $\Phi$ take the form 
$\mathcal{O}_{[x_1,x_2,x_3,x_4,x_5]}=\{x_1\}\times\{x_2\}\times\{x_3\}\times \mathbb{R}^2$ 
and the quotient space is $\mathbb{R}^5/\mathbb{R}^2 \simeq \mathbb{R}^3$. 
The variables $x_1,x_2,x_3$ give rise to a coordinate system in the quotient and  one can define the submersion
$$
\pi_{45} \colon (x_1,\ldots,x_5) \in \mathbb{R}^5 \longmapsto (x_1,x_2,x_3) \in \mathbb{R}^3.
$$
Additionally, Theorem \ref{th:RedMul} ensures that there exists a unique multisymplectic form $\widehat \Theta_{45}$ on~$\mathbb{R}^3$ whose pull-back to $\mathbb{R}^5$ is $\bar \Theta_{45}$, namely $\widehat \Theta_{45} = \d x_1 \wedge \d x_2 \wedge \d x_3$.

Since $(\Theta^B,\Phi_{45})$ is a reduction scheme such that the $\Phi_g$, with $g\in
\mathbb{R}^2$, are symmetries of the Lie system of the multisymplectic Lie system $(\mathbb{R}^5,\Theta^B,X)$, Theorem \ref{th:RedLie} ensures that we can reduce the multisymplectic Lie system. 
In fact, the Vessiot--Guldberg Lie algebra for the control system (\ref{Eq:ControlSys}) project onto $\mathbb{R}^3$  giving rise to  a new Lie algebra $V_{45}$ spanned by vector fields
$$
X_1^{45}=\frac{\partial}{\partial x_1}\,,\qquad X_2^{45}=\frac{\partial}{\partial x_2}+x_1\frac{\partial}{\partial x_3}\,,\qquad X_3^{45}=\frac{\partial}{\partial x_3}\,.
$$
Hence, also the system $X^{CS}$ can be projected onto $\mathbb{R}^3$ to give rise to a Lie system $X_{45}$ on~$\mathbb{R}^3$. 
Moreover, the Lie algebra $V_{45}$ consists of Hamiltonian vector fields relative to $\widehat \Theta_{45}$, which turns the triple $(\mathbb{R}^3,\widehat \Theta_{45},X_{45})$ into a new multisymplectic Lie system.

\end{subsubsection}
\begin{subsubsection}{Systems on Lie groups and quantum  harmonic oscillators  with a spin-magnetic term}\label{SubSec:OsSM}

  Let us consider a Lie system on $G^{os}:=SL(2,\mathbb{R})\times SO(3)$ related to a quantum mechanical system given by a $t$-dependent harmonic oscillator with a spin-magnetic term. This example will be useful in the next section to illustrate how several reductions of a multisymplectic Lie system can be used to reconstruct the dynamics of the original one. 

  The Lie algebra of $G^{os}$ takes the form $\mathfrak{g}^{os}:=\mathfrak{sl}(2,\mathbb{R})\oplus\mathfrak{so}(3)$ where $\oplus$ represents the direct sum of the Lie algebra $\mathfrak{sl}(2,\mathbb{R})$ and  $\mathfrak{so}(3)$.
  
  Let us define the Lie system on $G^{os}$ of the form
  \begin{equation}\label{eq:Partial1}
  \frac{{\rm d}g}{{\rm d}t}=\sum_{\alpha=1}^rb_\alpha(t)X^R_\alpha(g)=:X^{os}(t,g)\,,\qquad g\in G^{os}\,,\qquad t\in\mathbb{R}\,,
  \end{equation}
  where $b_1(t),\ldots,b_6(t)$ are arbitrary $t$-dependent functions and $\{X^R_1,\ldots,X^R_6\}$ form a basis of the Lie algebra $V^R$ of right-invariant vector fields on a Lie group $G^{os}$ in such a way that $\{X^R_1,X_2^R,X_3^R\}$ forms a basis of the Lie algebra $V_{sl}:=\langle X_1^R,X_2^R,X_3^R\rangle\simeq\mathfrak{sl}(2,\mathbb{R})$ and $\{X^R_4,X^R_5,X_R^6\}$ stands for a basis of the Lie algebra $V_{so}:=\langle X_4^R,X_5^R,X_6^R\rangle\simeq \mathfrak{so}(3)$.
  
  The relevance of the Lie system (\ref{eq:Partial1}) is due to the fact that it appears in the solution of quantum harmonic oscillators with a spin-magnetic term \cite{Dissertationes}. More specifically, particular cases of (\ref{eq:Partial1}) appear as the automorphic Lie systems associated with the $t$-dependent Schr\"odinger equation related to a quantum $t$-dependent Hamiltonian operator of the form
  \begin{equation}\label{Eq:Op1}
  \widehat{H}_1(t)=\frac{\widehat{p}_x^2+\widehat{p}_y^2+\widehat{p}_z^2}{2}+\omega^2(t)\frac{\widehat{x}^2+\widehat{y}^2+\widehat{z}^2}{2}+B_x(t)\widehat{S}_x+B_y(t)\widehat{S}_y+B_z(t)\widehat{S}_z,
  \end{equation}
  where  $\omega(t)$ is any $t$-dependent frequency, $B_x(t),B_y(t),B_z(t)$ are the coordinates of a magnetic field and $\widehat{S}_x, \widehat{S}_y, \widehat{S}_z$ can be assumed to be, for instance, the Pauli matrices up to a proportional constant. Note that $\widehat{H}_1(t)$ acts on $\mathcal{L}^2(\mathbb{R}^3)\otimes \mathbb{C}^2$, where $\mathcal{L}^2(\mathbb{R}^3)$ is the space of complex square-integrable functions on $\mathbb{R}^3$. It is remarkable that (\ref{eq:Partial1}) is also related to other $t$-dependent Hamiltonian operators, e.g.
  \begin{equation}\label{Eq:Op2}
  \widehat{H}_2(t)=\frac{\widehat{p}_x^2+\widehat{p}_y^2+\widehat{p}_z^2}{2}+\omega^2(t)\frac{\widehat{x}^2+\widehat{y}^2+\widehat{z}^2}{2}+B_x(t)\widehat{L}_x+B_y(t)\widehat{L}_y+B_z(t)\widehat{L}_z,
  \end{equation}
  acting on $\mathcal{L}^2(\mathbb{R}^3)$, where $\widehat L_x,\widehat{L}_y,\widehat{L}_z$ are the angular momentum operators relative to the axes OX, OY, OZ, respectively. 
  In both previous cases, the key to understand the relation between (\ref{Eq:Op1}) and (\ref{Eq:Op2}) is the fact that ${\rm i}\widehat{H}_j(t)$, with $j=1,2$, can be written as a linear combination with $t$-dependent constants of a family of skew-Hermitian operators closing a real Lie algebra isomorphic to $\mathfrak{g}^{os}$ (cf. \cite{Dissertationes}). 
  
The above examples show that we can define a momentum map of the form
$$\textstyle
\mathcal{J}^{os}:G^{os}\times \mathcal{G}(\mathfrak{g}^{os}) \rightarrow \bigwedge^\bullet T^*G^{os}\,.
$$  
We call $\mathcal{G}(\mathfrak{g}^{os})$ a {\it traceless Grassmannian}, i.e. the space of traceless subalgebras of $\mathfrak{g}^{os}$.
This mapping allows us to gather all previous reduction processes.
  Although the relevant properties of the system are better understood by using a geometric approach, we hereafter provide a local coordinate expression for this system by using canonical coordinates of the second kind on $G^{os}$ of the form
  $$\phi:(v_1,\ldots,v_6)\in \mathfrak{g}^{os}\mapsto \exp(-v_1e_1)\times\ldots\times\exp(-v_6e_6)\in G^{os}\,,$$ 
  for $X_\alpha^R(e)=-e_\alpha$ for $\alpha=1,\ldots,6$,
  namely
  \begin{equation}\label{WNQH}
  \begin{dcases}
  \frac{dv_1}{dt}=b_1(t)+b_2(t)\, v_1+b_3(t)\,v_1^2\,,&\frac{dv_4}{dt}=b_4(t)+\tan v_5(b_5(t)\sin v_4 + b_6(t)\cos v_4)\,,\\
  \frac{dv_2}{dt}=b_2(t)+2\,b_3(t)\,v_1\,,&\frac{dv_5}{dt}=b_5(t)\cos v_4 - b_6(t)\sin v_4 \,,\\
  \frac{dv_3}{dt}=e^{v_2}\,b_3(t)\,, &\frac{dv_6}{dt}=\frac{b_5(t)\sin v_4 + b_6(t)\cos v_4}{\cos v_5}\,.
  \end{dcases}
  \end{equation}
We have now the following coordinate expressions for the vector fields
  \begin{equation}\label{Rel}
  \begin{aligned}
    X^R_1&=\frac{\partial}{\partial v_1}\,,&X^R_4&=\frac{\partial }{\partial v_4}\,,\\
    X^R_2&=v_1\frac{\partial}{\partial v_1}+\frac{\partial }{\partial v_2}\,,&X^R_5&=\sin v_4\tan v_5\frac{\partial}{\partial v_4} + \cos v_4\frac{\partial}{\partial v_5} + \frac{\sin v_4}{\cos v_5}\frac{\partial}{\partial v_6}\,,\\
    X^R_3&=v_1^2\frac{\partial }{\partial v_1}+2v_1\frac{\partial }{\partial v_2}+e^{v_2}\frac{\partial }{\partial v_3}\,,&X^R_6&=\cos v_4\tan v_5\frac{\partial}{\partial v_4} - \sin v_4\frac{\partial}{\partial v_5} + \frac{\cos v_4}{\cos v_5}\frac{\partial}{\partial v_6}\,.
  \end{aligned}
  \end{equation}
  Recall that right-invariant vector fields must be defined everywhere. The singularity in the above local coordinate expression when $v_5=\pi/2+k\pi$ for $k\in \mathbb{Z}$ is only due to the fact that  canonical coordinates of the second kind are defined, generically, only on an open neighbourhood of the neutral element of $G^{os}$. A simple change of coordinates would remove the singularity in the expression of right-invariant vector fields when the previous second canonical coordinates fail to be well defined.
  
  Next, it will be useful to describe the commutation relations among the above vector fields
  {\small
    \begin{equation*}
    \begin{aligned}
      & [X^R_1,X^R_2] = X^R_1\,, & [X^R_2,X^R_3] & = X^R_3\,, & [X^R_3,X^R_4] & = 0\,, & [X^R_4,X^R_5] & = X^R_6\,, & [X^R_5,X^R_6] & = X_4\,, \\
      & [X^R_1,X^R_3] = 2 X^R_2\,, & [X^R_2,X^R_4] & = 0\,, & [X^R_3,X^R_5] & = 0\,, & [X^R_4,X^R_6]& = -X_5^R\,, && \\
      & [X^R_1,X^R_4] = 0\,, & [X^R_2,X^R_5] & = 0\,, & [X^R_3,X^R_6] & = 0\,, &&&& \\
      & [X^R_1,X^R_5] = 0\,, & [X^R_2,X^R_6] & = 0\,, &&&&&& \\
      & [X^R_1,X^R_6] = 0\,.&&&&&&&&
    \end{aligned}
    \end{equation*}
  }
  
  Thus, the vector fields given in \eqref{Rel} span a real 6-dimensional Lie algebra $V^{os}$ isomorphic to $\mathfrak{sl}_2\oplus\mathfrak{so}_3$. Their dual forms are given by 
  \begin{equation*}
 \begin{aligned}
   & \eta_1^R = \d v_1 - v_1\d v_2 + v_1^2 e^{-v_2}\d v_3\,, & \eta_4^R & = \d v_4 - \sin v_5 \d v_6\,,\\
   & \eta_2^R = \d v_2 - 2v_1 e^{-v_2}\d v_3\,, & \eta_5^R & = \cos v_4\d v_5 + \sin v_4\cos v_5\d v_6\,,\\
   & \eta_3^R = e^{-v_2}\d v_3\,, & \eta_6^R & = -\sin v_4\d v_5 + \cos v_4\cos v_5\d v_6\,.
 \end{aligned}
 \end{equation*}
  
  Following the method presented in \cite{GLMV19}, we can build up a multisymplectic form on $G^{os}= SL_2\times SO_3$ such that the elements of $V^{os}$ become Hamiltonian vector fields with respect to it.
  Indeed, the obtained multisymplectic form reads
 \begin{equation}\label{eq:vol-form-os}
  \Theta^{os}=\eta_1^R\wedge \dotsb\wedge \eta^R_6=e^{-v_2}\cos v_5\,\d v_1\wedge\d v_2\wedge\d v_3\wedge\d v_4\wedge\d v_5\wedge\d v_6\,.
 \end{equation}
  
 We will use the Lie symmetries of the multisymplectic Lie system $(G^{os},\Theta^{os},X^{os})$ to reduce it. Recall that the Lie symmetries of a Lie system are given by the system of partial differential equations \eqref{Eq:SysRos}. In this case, the Lie algebra $V^S = {\rm Sym}(V^{os})$ is spanned by the vector fields
 \begin{equation}\label{eq:syms-CM}
   \begin{aligned}
     & Y_1^L = e^{v_2}\frac{\partial}{\partial v_1} + 2v_3\frac{\partial}{\partial v_2} + v_3^2\frac{\partial}{\partial v_3}\,, & Y_4^L & = \frac{\cos v_6}{\cos v_5}\frac{\partial}{\partial v_4} - \sin v_6\frac{\partial}{\partial v_5} + \cos v_6\tan v_5\frac{\partial}{\partial v_6}\,,\\
     & Y_2^L = \frac{\partial}{\partial v_2} + v_3\frac{\partial}{\partial v_3}\,, & Y_5^L & = \frac{\sin v_6}{\cos v_5}\frac{\partial}{\partial v_4} + \cos v_6\frac{\partial}{\partial v_5} + \sin v_6\tan v_5\frac{\partial}{\partial v_6}\,,\\
     & Y_3^L = \frac{\partial}{\partial v_3}\,, & Y_6^L & = \frac{\partial}{\partial v_6}\,,
   \end{aligned}
 \end{equation}
where the only non-vanishing commutation relations are
 \begin{equation*}
     \begin{aligned}
       & [Y_1^L,Y_2^L] = -Y_1^L\,, & [Y_2^L,Y_3^L] & = -Y_3^L\,, & [Y_4^L,Y_5^L] & = -Y_6^L\,, & [Y_5^L,Y_6^L] & = -Y_4^L\,,\\
       & [Y_1^L,Y_3^L] = -2Y_2^L\,, &&& [Y_4^L,Y_6^L] & = Y_5^L\,.
     \end{aligned}
 \end{equation*}

 Hence, $V^S = \langle Y_1^L,Y_2^L,Y_3^L,Y_4^L,Y_5^L,Y_6^L\rangle\simeq V^{os}$ via the Lie algebra isomorphism mapping $X_\alpha^R\mapsto -Y_\alpha^L$ for $\alpha=1,\ldots,6$. 

Let us consider the momentum map
$$
\begin{array}{rcccc}
\mathcal{J}^{os} \colon & G^{os}\times \mathcal{G}(\mathfrak{g}^{os}) &
\longrightarrow & \bigwedge^\bullet T^*G^{os}
\\
& (\xi,w) &
\longmapsto & \iota_{w}\Theta^{os}(\xi) \,.
\end{array}
$$

$\bullet$ First, consider the reduction of $\Theta^{os}$ by  $
Y_1^L\wedge Y_2^L\wedge Y_3^L$. We have the quotient map projection
$$ \pi_{123}\colon (v_1,\dotsc,v_6)\in G^{os}\longmapsto (v_4,v_5,v_6)\in G/SL_2\simeq SO_3\,. $$
The projections of the elements of the basis \eqref{Rel} of the Vessiot--Guldberg Lie algebra $V^{os}$ onto $G^{os}/SL_2$ read $\widetilde X_1^{sl} = \widetilde X_2^{sl} = \widetilde X_3^{sl} = 0$, while
\begin{align*}
    \widetilde X_4^{sl} &= \frac{\partial}{\partial v_4}\,,\\
    \widetilde X_5^{sl} &= \sin v_4\tan v_5\frac{\partial}{\partial v_4} + \cos v_4\frac{\partial}{\partial v_5} + \frac{\sin v_4}{\cos v_5}\frac{\partial}{\partial v_6}\,,\\
    \widetilde X_6^{sl} &= \cos v_4\tan v_5\frac{\partial}{\partial v_4} - \sin v_4\frac{\partial}{\partial v_5} + \frac{\cos v_4}{\cos v_5}\frac{\partial}{\partial v_6}\,.
\end{align*}
It is clear that $\widetilde V^{sl} = \langle\widetilde X_4^{sl},\widetilde X_{5}^{sl},\widetilde X_6^{sl}\rangle$ is a three-dimensional Lie algebra isomorphic to $\mathfrak{so}_3$. This gives rise to a Lie system
$$ X^{sl} = \sum_{\alpha = 4}^6 b_\alpha(t)\widetilde X_\alpha^{sl} $$
on the quotient space $G_1^{os} = G^{os}/SL_2$. This is a multisymplectic Lie system with respect to the multisymplectic three-form $\Theta^{os}_{123}$ determined by
$$ \pi_{123}^*\Theta^{os}_{123} = \iota_{Y_1^L\wedge Y_2^L\wedge Y_3^L}\Theta^{os} = \cos v_5 \,\d v_4\wedge\d v_5\wedge\d v_6 $$
on $G_1^{os}$. 

$\bullet$ On the other hand, consider the reduction of $\Theta^{os}$ by  $
Y_4^L\wedge Y_5^L\wedge Y_6^L$. We have the projection
$$ \pi_{456}\colon (v_1,\dotsc,v_6)\in G^{os}\longmapsto (v_1,v_2,v_3)\in G^{os}/SO_3\simeq SL_2\,. $$
The projections of the elements of the basis \eqref{Rel} of the Vessiot--Guldberg Lie algebra $V^{os}$ onto $G^{os}/SO_3$ read
$$ \widetilde X_1^{so} = \frac{\partial}{\partial v_1}\,,\qquad \widetilde X_2^{so} = v_1 \frac{\partial}{\partial v_1} + \frac{\partial}{\partial v_2}\,,\qquad \widetilde X_3^{so} = v_1^2\frac{\partial}{\partial v_1} + 2v_1\frac{\partial}{\partial v_2} + e^{v_2}\frac{\partial}{\partial v_3}\,, $$
and $\widetilde X_4^{so} = \widetilde X_5^{so} = \widetilde X_6^{so} = 0$. It is clear that $\widetilde V^{so} = \langle\widetilde X_1^{so},\widetilde X_2^{so},\widetilde X_3^{so}\rangle$ is three-dimensional Lie algebra isomorphic to $\mathfrak{sl}_2$. This gives rise to a Lie system
$$ X^{so} = \sum_{\alpha = 1}^3 b_\alpha(t)\widetilde X_\alpha^{so} $$
on the quotient space $G_2^{os} = G^{os}/SO_3$. This is a multisymplectic Lie system on $G_2^{os}$ with respect to the multisymplectic three-form $\Theta^{os}_{456}$ induced by
$$ \pi^*_{456}\Theta^{os}_{456} = \iota_{Y_4^L\wedge Y_5^L\wedge Y_6^L}\Theta^{os} =- e^{-v_2}\d v_1\wedge\d v_2\wedge\d v_3\,. $$

\end{subsubsection}


 \section{Multisymplectic reconstruction}
 \label{section:reconstruction}

 This section gives criteria for determining a multisymplectic Lie system $(M,\Theta,X)$ from several of its multisymplectic reductions. Let us start by giving a particular example that illustrates our more general techniques to be deployed later on.

Let us go back to the multisymplectic Lie system $(G^{os},\Theta^{os},X^{os})$ related to (\ref{eq:Partial1}) and associated with quantum harmonic oscillators with a spin-magnetic term. Recall that $(G^{os},\Theta^{os},X^{os})$ gives rise, by using multisymplectic reductions, to two different multisymplectic Lie systems 
 $$
(G^{os}_1, \Theta^{os}_{123}, X^{sl}) \,,\qquad (G^{os}_2, \Theta^{os}_{456}, X^{so}\,)\,.
 $$
  Recall that $\Theta^{os}_{123}$ is a multisymplectic form on $ G^{os}_1$ such that $\pi^*_{123}\Theta^{os}_{123}=\iota_{Y_1^L\wedge Y_2^L\wedge Y_3^L}\Theta^{os}$.  As
 \begin{equation}\label{Eq:RedMul}
Y^L_1\wedge Y^L_2\wedge Y^L_3= e^{v_2}\frac{\partial}{\partial v_1}\wedge\frac{\partial}{\partial v_2}\wedge\frac{\partial}{\partial v_3}\,, 
 \end{equation}
  and $\Theta^{os}$ is 6-nondegenerate because it is a volume form on $G^{os}$, 
the original multisymplectic form is uniquely determined by $\Theta^{os}_{123}$, namely $\Theta_{123}^{os}=\cos v_5\d v_4\wedge \d v_5\wedge \d v_6$, and it must take the form \eqref{eq:vol-form-os}. Alternatively, the knowledge of $\Theta^{os}_{456}$, namely $\Theta^{os}_{456}=-e^{-v_2}\d v_1\wedge d v_2\wedge \d v_3$, and the fact that it was obtained as a multisymplectic reduction via the contraction of $\Theta^{os}$ with
$$  Y^L_4\wedge Y^L_5\wedge Y^L_6= \frac{1}{\cos v_5}\frac{\partial}{\partial v_4}\wedge\frac{\partial}{\partial v_5}\wedge\frac{\partial}{\partial v_6}\,,
 $$
 determines again the value of $\Theta^{os}$, which matches the value obtained previously via the previous multisymplectic reduction. 
 
 Let us now focus on retrieving the $t$-dependent vector field $X^{os}$ on $G^{os}$, whose reductions gave rise to the $t$-dependent vector fields on $G^{os}_1$ and $G^{os}_2$, given by 
 \begin{equation}\label{Eq:Reduc}
X^{sl} = \sum_{\alpha = 4}^6 b_\alpha(t)\widetilde X_\alpha^{sl}\,,\qquad X^{so} = \sum_{\alpha = 1}^3 b_\alpha(t)\widetilde X_\alpha^{so}\,,
 \end{equation}
 respectively. In this case, we know that $\pi_{123*}X^{os}=X^{sl}$, which determines $X^{os}$ up to a vector field on $G^{os}$ taking values in $\ker T\pi_{123}$. Similarly, $\pi_{456*}X^{os}=X^{so}$, which retrieves the value of $X^{os}$ up to a vector field  on $G^{os}$ taking values in $\ker T\pi_{456}$. Hence, the value of $X^{os}$ is determined at a point $g\in G^{os}$ up to an element of 
 $$
 \ker T_p\pi_{123}\cap \ker T_p\pi_{456}=\{ 0\}\,.
 $$
In other words, $X^{os}$ is uniquely determined by (\ref{Eq:Reduc}). Note that this example shows how to determine the original multisymplectic Lie system on a Lie group, related to a multisymplectic volume form, via two of its multisymplectic reductions.

The previous example illustrates how to set sufficient conditions to retrieve our initial multisymplectic Lie system from its multisymplectic reductions. For instance, we can consider the following proposition.

\begin{proposition}\label{Prop:Red1} Let $G$ be a Lie group with certain Lie subgroups $G_1,\ldots, G_k$. Let $(G/G_i,\Theta^i,X^{i})$, with $i=1,\ldots,k$, be a set of multisymplectic reductions of a multisymplectic Lie system  $(G,\Theta,X)$, let $\pi_i:G\rightarrow G/G_i$ and $\iota_{Y_i}\Theta$, with $1,\ldots,k$, be the projections and Noether invariants related to the multisymplectic reductions induced by the multivector fields $Y_1,\ldots, Y_k$ on~$G$. Assume that 
$$
\bigcap_{i=1}^k\ker T_g\pi_i=\{ 0\}\,,\qquad \forall g\in G\,,
$$
and suppose that $\Theta$ is a volume form on~$G$. Then, the multisymplectic Lie system $(G,\Theta,X)$ is univocally determined from its multisymplectic reductions.
\end{proposition}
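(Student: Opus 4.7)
The plan is to recover $\Theta$ and $X$ separately from the reduction data, using the intersection hypothesis on kernels for the vector field and the volume-form hypothesis for the multisymplectic form.

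For the vector field $X$, the fact that each $(G/G_i,\Theta^i,X^i)$ is obtained as a reduction gives $\pi_{i*}X=X^i$ for every $i=1,\ldots,k$. Pointwise this reads $T_g\pi_i(X(g))=X^i(\pi_i(g))$ for every $g\in G$, which determines $X(g)$ up to an element of $\ker T_g\pi_i$. If $X'$ were any other $t$-dependent vector field sharing all the projections $X^i$, then $X(g)-X'(g)$ would lie in $\bigcap_{i=1}^k\ker T_g\pi_i$, which equals $\{0\}$ by assumption; hence $X=X'$, and $X$ is recovered uniquely from the tuple $(X^1,\ldots,X^k)$ together with the projections $\pi_i$.

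For the multisymplectic form $\Theta$, recall from Theorem \ref{th:RedMul} that each reduced form on $G/G_i$ satisfies $\pi_i^*\Theta^i=\iota_{Y_i}\Theta$, where $Y_i$ is the nowhere-vanishing decomposable multivector field generated by the fundamental vector fields of the proper free unimodular action defining the $i$-th reduction. Since $\Theta$ is a volume form on the $n$-dimensional manifold $G$, the fibre $\mathsf{\Lambda}^n T_g^*G$ is one-dimensional at every $g$. Contraction with a non-zero decomposable multivector $Y_i(g)=v_1\wedge\cdots\wedge v_r$ on this fibre is injective: writing $\iota_{Y_i(g)}\Theta(g)=\iota_{v_r}\cdots\iota_{v_1}\Theta(g)$, non-degeneracy of $\Theta(g)$ together with the linear independence of the factors of $Y_i(g)$ guarantees that this form is non-zero. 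Therefore $\Theta(g)$ is uniquely determined by $\iota_{Y_i}\Theta(g)=(\pi_i^*\Theta^i)(g)$ for any single $i$, and, varying $g$, so is $\Theta$ itself.

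Thus both ingredients of the multisymplectic Lie system $(G,\Theta,X)$ are pinned down by the reduction data. The main conceptual obstacle is of an expository nature: to make clear that the intersection hypothesis $\bigcap_i\ker T_g\pi_i=\{0\}$ is used only to reconstruct $X$, whereas the volume-form hypothesis alone yields the reconstruction of $\Theta$ from a single reduction. A minor point to double-check is that the $Y_i$ remain nowhere vanishing and decomposable on the whole of $G$, which is precisely guaranteed by the freeness and properness clauses built into the notion of a multisymplectic reduction scheme in Theorem \ref{th:RedMul}.
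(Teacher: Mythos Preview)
Your proof is correct and follows essentially the same route as the paper's: you recover $X$ from the intersection hypothesis on the kernels of the $T_g\pi_i$, and you recover $\Theta$ from a single reduction using that the top exterior power is one-dimensional and contraction with a nowhere-vanishing decomposable multivector is injective there. Your argument for $\Theta$ is in fact slightly more explicit than the paper's, which simply asserts that $\Theta$ is univocally determined because it is a volume form and $\pi_i$ is a surjective submersion.
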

\begin{proof} Since by construction of the multisymplectic reductions, one has that $X^i_{\pi_i(g)}=T_g\pi_iX_g$ for $i=1,\ldots,k$ and every $g\in G$, then the value of $X$ at $g$, let us say $X_g$, is  determined by $X^1,\ldots,X^k$ up to an element of
$
\bigcap_{i=1}^k\ker T_g\pi_i\,,
$
which is, by assumption, equals to zero. Hence, $X$ is determined univocally by $X^1,\ldots,X^k$. 

Let us now study the determination of the multisymplectic form $\Theta$ on $G$ via the Noether invariants $\iota_{Y_i}\Theta$ associated with the multisymplectic reductions and the induced multisymplectic forms on the different quotients $G/G_i$. Given one of them, let us say $\Theta^i$, one has that 
$$
\iota_{Y_i}\Theta=\pi^*_i\Theta^i\,.
$$
Since $\Theta$ is a volume form and $\pi$ is a surjective submersion, $\Theta$ is univocally determined by $\Theta^i$, the multivector field $Y_i$, and the above relation.
\end{proof}

To understand the generality of the assumptions of the previous proposition, it is worth noting that multisymplectic volume forms are not as restrictive as it may seem at first. In two-dimensional manifolds, every multisymplectic form is a volume form. On three-dimensional manifolds, it occurs the same way as differential two- and one-forms are degenerate. In 4-dimensional manifolds, differential three-forms are degenerate because they can all be written, locally, as the contraction of a volume form with a vector field. Moreover, differential one-forms are also degenerate and the only multisymplectic forms on a 4-dimensional manifold are volume forms or symplectic ones. It is worth stressing that we are mainly interested in multisymplectic forms that are not symplectic ones, as only in this case the multisymplectic formalism will provide information that is not available by means of other theories. Hence, non-volume multisymplectic forms, which are not symplectic forms, appear only on manifolds of dimension five or higher. 

Anyhow, Proposition \ref{Prop:Red1} can be modified to deal with multisymplectic Lie systems relative to a class of non-volume multisymplectic forms. As shown next, these multisymplectic forms need not be $k$-nondegenerate for $k>1$, and this leads to new features. Let us provide an example to illustrate the new geometric aspects introduced in this case.

Consider a multisymplectic form on $\mathbb{R}^8$ of the form
$$
\Omega^S=\!\!\!\!\!\!\!\!\sum_{1\leq i_1<i_2<i_3<i_4<i_5<i_6\leq 8}\!\!\!\!\!\!\!\!\d x^{i_1}\wedge \d x^{i_2}\wedge \d x^{i_3}\wedge \d x^{i_4}\wedge \d x^{i_5}\wedge \d x^{i_6},
$$
where $\{x^1,\ldots,x^8\}$ are global coordinates on~$\mathbb{R}^8$. In fact, $\Omega^S$ is closed and, since the mappings $\Omega^{S\sharp} _x: T_xM\rightarrow \bigwedge^5 T_xM$ are injections for every $x\in M$, one-nondegenerate. Moreover, $\Omega^S$ is also two- and three-nondegenerate, i.e. if $\iota_Z\Omega^S=0$ for  $Z\in \mathfrak{X}^2(\mathbb{R}^8)$ or $Z\in \mathfrak{X}^3(\mathbb{R}^8)$, then $Z=0$. In fact, $\iota_Z\Omega^S=0$ can be seen as an algebraic equation in the coefficients of $Z\in \mathfrak{X}^s(\mathbb{R}^8)$ in a certain basis of differential $s$-forms. In particular, when $Z\in \mathfrak{X}^2(\mathbb{R}^8)$, the algebraic system $\iota_Z\Omega^S=0$ can be seen as a system of $\displaystyle\binom{8}{4}$ linear equations on the $\displaystyle\binom{8}{2}$ coefficients of $Z$, which, after a simple but tedious calculation, turns out to have a unique trivial $Z=0$ solution. Moreover, if $Z\in \mathfrak{X}^3(\mathbb{R}^8)$, the algebraic system $\iota_Z\Omega^S=0$ can be seen as a system of  $\displaystyle\binom{8}{3}$ linear equations on the $\displaystyle\binom{8}{3}$ coefficients of 
$Z$, which, as in the previous case, has only a unique zero trivial solution. Note that this last fact also implies that $\Omega^S$ must be one- and two-nondegenerate.

Consider the Lie system on $\mathbb{R}^8$ of the form
$$
X^S=\sum_{i=1}^8b_i(t)\frac{\partial}{\partial x^i},
$$
where $b_1(t),\ldots,b_8(t)$ are arbitrary $t$-dependent functions. It is immediate that the vector fields $\{\partial/\partial x^1,\ldots,\partial/\partial x^8\}$ are Hamiltonian relative to $\Omega^S$. Then, $(\mathbb{R}^8,\Omega^S,X^S)$ becomes a multisymplectic Lie system.

Consider the Lie group action $\Phi:(\lambda^1,\ldots,\lambda^8,x^1,\ldots,x^8)\in \mathbb{R}^8\times \mathbb{R}^8\mapsto (\lambda^1+x^1,\ldots,\lambda^8+x^8)\in \mathbb{R}^8$, which admits a momentum map
$$\textstyle
\mathcal{J}: \mathbb{R}^8\times  \mathcal{G}(\mathbb{R}^{8})   \rightarrow \bigwedge^\bullet T^*\mathbb{R}^{8}\,.
$$
Given the Hamiltonian bivector fields on $\mathbb{R}^8$ of the form
$$
Z_a=\frac{\partial}{\partial x^1}\wedge \frac{\partial}{\partial x^2}\,,\qquad Z_b=\frac{\partial}{\partial x^4}\wedge \frac{\partial}{\partial x^5}\,,\qquad  Z_c=\frac{\partial}{\partial x^7}\wedge \frac{\partial}{\partial x^8}\,,
$$
let us analyse the  multisymplectic reductions of $\Omega^S$ related to them. In particular,
$$
\iota_{Z_a}\Omega^S=\!\!\!\!\sum_{3\leq i_1<i_2<i_3<i_4\leq 8}\!\!\!\!\!\!\d x^{i_1}\wedge \d x^{i_2}\wedge \d x^{i_3}\wedge \d x^{i_4}\,,\qquad 
 \iota_{Z_c}\Omega^S=\!\!\!\!\sum_{1\leq i_1<i_2<i_3<i_4\leq 6}\!\!\!\!\!\!\d x^{i_1}\wedge \d x^{i_2}\wedge \d x^{i_3}\wedge \d x^{i_4}\,,
$$
$$
\iota_{Z_b}\Omega^S=\!\!\!\!\sum_{1\leq i_1<i_2<i_3<i_4\leq 8}\!\!\!\!\!\!\epsilon_{45i_1i_2i_3i_4}\d x^{i_1}\wedge \d x^{i_2}\wedge \d x^{i_3}\wedge \d x^{i_4}\,,
$$ 
where $\epsilon_{45i_1i_2i_3i_4}$ is the Levi--Civita symbol relative to the sequence of indices $4,5,i_1,i_2,i_3,i_4$.
The above Noether invariants are basic forms relative to the canonical projections $\pi_a:(x^1,\ldots,x^8)\in \mathbb{R}^8\mapsto (x^3,\ldots,x^8)\in \mathbb{R}^6$, $\pi_b:(x^1,\ldots,x^8)\in \mathbb{R}^8\mapsto (x^1,x^2,x^3,x^6,x^7,x^8)\in \mathbb{R}^6$, and $\pi_c:(x^1,\ldots,x^8)\in \mathbb{R}^8\mapsto (x^1,x^2,x^3,x^4,x^5,x^6)\in \mathbb{R}^6$ associated with the multisymplectic reductions of $\Omega^S$ by $Z_a, Z_b$, and $Z_c$, respectively.  In other words, $\pi_a^*\Omega_a=\iota_{Z_a}\Omega^S$, $\pi_b^*\Omega_b=\iota_{Z_b}\Omega^S$, $\pi_c^*\Omega_c=\iota_{Z_c}\Omega^S$ for three unique differential 4-forms $\Omega_a, \Omega_b$, and $\Omega_c$ on~$\mathbb{R}^6$, which are therefore closed. They are also one-nondegenerate because $\Omega^S$ is three-nondegenerate. 

On the other hand, the value of $\pi^*_a\Omega_a$ does not allow us to characterise univocally $\Omega^S$   via the equality $\iota_{Z_a}\Omega^S=\pi^*_a\Omega_a$   since there are many differential 6-forms, $\Omega$, satisfying that $\iota_{Z_a}\Omega=0$. Therefore, $\pi^*_a\Omega_a=\iota_{Z_a}\Omega^S$ determines $\Omega^S$ up to a differential 6-form, let us say $\Omega_{aa}$, whose contraction with $Z_a$ vanishes. Moreover, the differential 6-forms $\Omega^S+\epsilon\Omega_{aa}$, for small $\epsilon$, are one-nondegenerate and $\iota_{Z_a}(\Omega^S+\epsilon\Omega_{aa})=\pi^*_a\Omega_a$. Hence, to determine every possible multisymplectic 6-form on $\mathbb{R}^8$, let us say $\Omega$, satisfying $\iota_{Z_a}\Omega=\pi^*_a\Omega_a$, it is interesting to define the following notion.

\begin{definition} Let $w$ be a $r$-tangent vector at $x\in M$, i.e. $w\in \bigwedge^rT_xM$. The {\it $\ell$-annihilator} of $w$,  denoted by $w^{\circ,\ell}$, is the space of skew-symmetric $\ell$-forms, $\Omega_x$, on $T_xM$ such that $\iota_w\Omega_x=0$.
\end{definition}

In the previous example, the knowledge of $\pi^*_a\Omega_a$ determines the value of $\Omega^S$ up to a differential 6-form vanishing on~$Z_a$. Meanwhile, 
$$
Z_a^{\circ,6}\ni \d x^{i_1}\wedge \d x^{i_2}\wedge \d x^{i_3}\wedge \d x^{i_4}\wedge \d x^{i_5}\wedge \d x^{i_6}
$$
if, and only if, $\{1,2\}\not\subset\{i_1,\ldots,i_6\}$, which determines $Z_a^{\circ,6}$. Similarly, one can determine $Z_b^{\circ,6}$ and $Z_c^{\circ,6}$. From this, it follows that $Z_a^{\circ,6}\cap Z_b^{\circ,6}\cap Z_c^{\circ,6}=\{0\}$ and $\Omega^S$ is the unique differential 6-form satisfying that $\iota_{Z_a}\Omega^S=\pi_a^*\Omega_a$, $\iota_{Z_b}\Omega^S=\pi_b^*\Omega_b$, and  $\iota_{Z_c}\Omega^S=\pi_c^*\Omega_c$. Hence, it is immediate that Proposition \ref{Prop:Red1} can be generalised as follows.

\begin{proposition}
Let $G$ be a finite-dimensional Lie group with certain Lie subgroups $G_1,\ldots, G_k$. Let $(G/G_i,\Theta^i,X^{i})$, with $i=1,\ldots,k$, be a set of multisymplectic reductions of a multisymplectic Lie system  $(G,\Theta,X)$;
where $\Theta$ is an $(s+1)$-nondegenerate multisymplectic $\ell$-form on~$G$. 
Let $\pi_i:G\rightarrow G/G_i$ and $\Theta^i=\iota_{Z_i}\Theta$, with $1,\ldots,k$ and $Z_i\in \mathfrak{X}^s(G)$, be the projections and Noether invariants related to the multisymplectic reductions. If
$$
\bigcap_{i=1}^k\ker T_g\pi_i=\{ 0\}\quad\text{and}\quad\bigcap_{i=1}^k Z_i^{\circ,\ell}(g)=\{0\}\,,\qquad \forall g\in G\,,
$$
then, the multisymplectic Lie system $(G,\Theta,X)$ is univocally determined from its multisymplectic reductions $(G/G_i,\Theta^i,X^i)$, with $i=1,\ldots,k$.
\end{proposition}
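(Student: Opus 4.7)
The plan is to prove uniqueness of $X$ and of $\Theta$ separately, just as in Proposition \ref{Prop:Red1}, and to use the new $\ell$-annihilator hypothesis as the exact replacement for the volume-form argument used there.

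For the Lie system, I would copy the strategy of Proposition \ref{Prop:Red1} verbatim. For each index $i$ and each $g\in G$, the reduction construction forces $T_g\pi_i(X_g)=X^i_{\pi_i(g)}$, so the knowledge of $X^i$ pins down $X_g$ modulo $\ker T_g\pi_i$. Intersecting these ambiguities over $i=1,\ldots,k$ and invoking the hypothesis $\bigcap_{i=1}^k\ker T_g\pi_i=\{0\}$ shows that $X_g$ is uniquely determined at every point, hence $X$ is recovered globally from $(X^1,\ldots,X^k)$.

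For the multisymplectic form, I would argue by contradiction. Suppose $\Theta'$ is another $(s+1)$-nondegenerate multisymplectic $\ell$-form on $G$ that produces the same reduction data, i.e.\ $\iota_{Z_i}\Theta'=\pi_i^*\Theta^i=\iota_{Z_i}\Theta$ for every $i$, where the rightmost equality is the defining relation of the multisymplectic reduction supplied by Theorem \ref{th:RedMul}. Then $\iota_{Z_i}(\Theta-\Theta')=0$ for each $i$, so that at every $g\in G$ the skew-symmetric $\ell$-form $(\Theta-\Theta')(g)$ lies in $Z_i^{\circ,\ell}(g)$ for all $i$, and therefore in the intersection $\bigcap_{i=1}^k Z_i^{\circ,\ell}(g)$. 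By hypothesis, this intersection is trivial, which forces $\Theta(g)=\Theta'(g)$ for every $g$, hence $\Theta=\Theta'$.

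I do not expect a genuine obstacle: both arguments are pointwise and purely linear-algebraic, and both hypotheses of the proposition are designed precisely to collapse the pointwise ambiguity to zero. The only conceptual point worth highlighting in the write-up is why the $\ell$-annihilator condition is the correct generalisation of the volume-form condition in Proposition \ref{Prop:Red1}: when $\ell=\dim G$ and each $Z_i$ is nowhere vanishing, the space $Z_i^{\circ,\ell}(g)$ is automatically $\{0\}$, so no separate annihilator hypothesis is needed there; in the non-volume setting, where $\ell<\dim G$, the nondegeneracy of $\Theta$ alone no longer suffices to reconstruct it from a single Noether invariant, and the explicit assumption $\bigcap_i Z_i^{\circ,\ell}(g)=\{0\}$ supplies the missing information.
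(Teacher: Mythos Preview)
Your proposal is correct and matches the paper's intended argument. The paper does not give an explicit proof of this proposition; it merely states that ``it is immediate that Proposition \ref{Prop:Red1} can be generalised as follows'' after working through the $\mathbb{R}^8$ example, so your write-up is precisely the unpacking of that claim: the $X$-part is copied verbatim from the proof of Proposition \ref{Prop:Red1}, and the $\Theta$-part replaces the volume-form uniqueness by the pointwise linear-algebraic observation that $(\Theta-\Theta')(g)\in\bigcap_i Z_i^{\circ,\ell}(g)=\{0\}$.
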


\section{Conclusions and outlook}

This work presents a reduction procedure for multisymplectic Lie systems, which were first introduced in \cite{GLMV19}, by using a suitable momentum map. We have also developed a reconstruction procedure which, in favourable cases, allows us to recover the original Lie system from its reductions. These results are illustrated by working out several examples, including the Schwarz equation, dissipative quantum harmonic oscillators, a control system, and a quantum harmonic oscillator with a spin-magnetic term.

In the future, it is planned to study new physical and mathematical applications of our methods, e.g.\ to Schwarz derivatives \cite{LS20}. It would also be interesting to investigate multisymplectic Lie systems that are not of locally transitive type. In those cases, the Lie system is not locally diffeomorphic to an automorphic Lie system on a Lie group and we cannot use the structures on Lie groups to obtain multisymplectic forms, Lie symmetries, reductions and so on. This also raises the question of the existence of a multisymplectic form that turns a Lie system into a multisymplectic one. 

Note that shown results about Schwarz equations highlight that multisymplectic structures may allow for the study of relative equilibrium points for systems that are not Hamiltonian ones relative to any symplectic or Poisson structure. It would be interesting to study the extension of energy-momentum methods \cite{MS88} to Hamiltonian systems relative to multisymplectic forms. The energy-momentum method is based heavily on a Marsden--Weinstein  reduction procedure (cf. \cite{LZ21,MS88}), which has been depicted here for multisymplectic structures. Therefore, our work solves one of the main drawbacks in developing such a method.  We hope to follow these lines of research in upcoming works.

Finally, it would be interesting to develop an analogue of our methods to study the reductions of Lie systems admitting a Vessiot--Guldberg Lie algebra of Hamiltonian vector fields relative to other geometric structures, e.g.\ $k$-cosymplectic \cite{dLeon2015} or $k$-contact \cite{GGMRR20} structures.

\section*{Acknowledgements}

The authors acknowledge fruitful discussions and comments from our colleague Miguel-C. Mu\~noz-Lecanda.
We also acknowledge partial financial support from the 
project MINIATURA 5 of the Polish National Science Centre (NCN) under grant number Nr 2021/05/X/ST1/01797,
and also from the
Spanish Ministerio de Ciencia e Innovaci\'on projects
PGC2018-098265-B-C31, 
PGC2018-098265-B-C33
and
RED2018-102541-T.
We also thank the anonymous referees, whose interesting comments and suggestions have helped us improve our article.

\addcontentsline{toc}{section}{References}

\end{document}